\documentclass[aps,pra,showpacs,amssymb,superscriptaddress,twocolumn,nofootinbib]{revtex4-2}
\pdfoutput=1
\usepackage{fullpage}
\usepackage{hyperref}
\usepackage[margin=2cm]{geometry}

\usepackage{microtype}
\usepackage{enumitem}

\usepackage{amsmath}
\usepackage{amsthm}
\usepackage{amssymb}
\usepackage{dsfont}
\usepackage{url}
\usepackage{hyperref}
\hypersetup{
    colorlinks=true,
    linkcolor=blue,
    citecolor=blue,
    filecolor=blue,
    urlcolor=blue
}
\usepackage{graphicx}
\usepackage{caption}
\captionsetup{justification=raggedright,singlelinecheck=true,labelfont={bf,sf,small},margin=1mm}
\usepackage{subcaption}
\captionsetup[subfigure]{font={small},singlelinecheck=false,justification=raggedright}

\usepackage{wrapfig}
\usepackage[normalem]{ulem}

\usepackage{physics}
\usepackage{array}

\newtheorem{theorem}{Theorem}

\newtheorem{proposition}{Proposition}
\newtheorem{observation}{Observation}

\usepackage{tikz}
\usetikzlibrary{arrows.meta, positioning}

\usepackage{tcolorbox}

\begin{document}

\title{Symmetric Localizable Multipartite Quantum Measurements from Pauli Orbits}
\author{Jef Pauwels}
\email{jef.pauwels@unige.ch}
\affiliation{Department of Applied Physics, University of Geneva, Switzerland}
\affiliation{Constructor University, 28759 Bremen, Germany}
\author{Cyril Branciard}
\affiliation{Univ.\ Grenoble Alpes, CNRS, Grenoble INP\footnote{Institute of Engineering Univ.\ Grenoble Alpes}, Institut N\'eel, 38000 Grenoble, France}
\author{Alejandro Pozas-Kerstjens}
\affiliation{Department of Applied Physics, University of Geneva, Switzerland}
\author{Nicolas Gisin}
\affiliation{Department of Applied Physics, University of Geneva, Switzerland}
\affiliation{Constructor University, 28759 Bremen, Germany}

\begin{abstract}
While the structure of entangled quantum states is relatively well understood, the characterization of entangled measurements, especially in multipartite and high-dimensional settings, remains far less developed. In this work, we introduce a general approach to construct highly symmetric, locally encodable orthonormal measurement bases, as orbits of a single fiducial state under tensor-product actions of Pauli subgroups. This framework recovers the Elegant Joint Measurement—a two-qubit measurement whose local marginals form a regular tetrahedron on the Bloch sphere—as a special case, and we extend the construction to both more systems and higher dimensions. We analyze the entanglement cost required to implement these measurements locally via the Clifford hierarchy and use this criterion to classify them. We show how the symmetry of our constructions allows us to characterize their localizability, which is generally a challenging problem, and to identify certain classes of measurement bases that are efficiently localizable.  Our approach offers a systematic toolkit for designing entangled measurements with rich symmetry and implementability properties.
\end{abstract}

\maketitle

\section{Introduction}

Entanglement lies at the heart of quantum information science~\cite{Horodecki2009}, and the study of entangled states has driven decades of progress across quantum computing~\cite{Jozsa2003}, communication~\cite{Vazirani2014,Tavakoli2021a}, and foundations~\cite{Bell1988}. In contrast, far less is known about the structure, classification, and implementation of entangled measurements and nonlocal observables~\cite{Cavalcanti2023}. While much attention has focused on the preparation and manipulation of entangled states, the role of entangled measurements in quantum information—particularly multipartite measurement bases—indeed remains little explored~\cite{DelSanto2024,Pauwels2025}.

Filling this gap is not merely of theoretical interest. In many quantum information tasks, such as quantum teleportation~\cite{Bennet1993}, network nonlocality~\cite{Tavakoli2022}, distributed quantum computing~\cite{Caleffi2024}, quantum communication~\cite{Piveteau2022,zhang2025} and quantum network routing~\cite{Briegel1998}, the ability to perform joint entangled measurements is essential. Understanding how to construct such measurements with desirable properties—symmetry, constraints on their entanglement, and implementability, both in terms of local encodability and localizability (low-complexity realization with shared entanglement)—could unlock new insights.

In this work, we take a concrete step in this direction by revisiting a paradigmatic example: the Elegant Joint Measurement (EJM), introduced in Ref.~\cite{Gisin2019}. The EJM is a partially entangled two-qubit orthonormal basis whose single-qubit marginals form a regular tetrahedron on the Bloch sphere. It is isoentangled~\cite{DelSanto2024} and constitutes one of the few known entangled measurements that can be locally implemented at low entanglement cost~\cite{Pauwels2025}.
Starting from this example, we ask which of its properties can be generalized. We identify Pauli covariance, geometric symmetry, local encodability~\cite{Tanaka007}, and efficient localizability~\cite{Pauwels2025} as core properties and ask whether these can be extended to larger multipartite systems.

We describe a method for constructing multipartite orthonormal measurement bases in any uniform local dimension, using group orbits of a fiducial state under tensor-product actions of the generalized Pauli group. All such bases are locally encodable and exhibit local tetrahedral symmetry by construction. Through this approach, in particular we:
\begin{itemize}
\item Recover the two-qubit EJM as a special case,
\item Identify a family of party-permutation invariant (PPI) qubit bases with tetrahedral symmetry for all odd numbers of qubits,
\item Find three-qubit bases with regular tetrahedral Bloch vectors, extending the EJM structure,
\item Construct real-valued bases with rectangular symmetry for an arbitrary number of qubits,
\item Recover the higher-dimensional generalizations of the EJM based on Weyl–Heisenberg symmetry~\cite{Czartowski2021},
\item Provide a simple characterization of which bases obtained through our construction are efficiently localizable—an otherwise difficult problem that becomes tractable here due to their symmetry.
\end{itemize}

Along the way, we connect the geometric structure of the single-qubit reductions to the algebraic properties of the fiducial state, and identify which group-theoretic conditions ensure that the basis formed by Pauli orbits is complete and orthonormal. Our results point toward a deeper understanding of the intersection between geometry, entanglement, and measurement theory.

\begin{figure*}[ht!]

\begin{tcolorbox}[width=0.8\textwidth,colback=gray!5!white, colframe=black!75, title=Paper Structure Overview, fonttitle=\bfseries, boxrule=0.5pt]

\begin{itemize}
  \item[\textbf{[\ref{subsec:standard_EJM}]}] \textbf{The Elegant Joint Measurement (EJM)} \\
  \hspace*{1em}-- Group orbit under $G_{\mathrm{tetra}}^{(2)} = \langle Z \otimes Z,\, X \otimes X \rangle$ \\
  \hspace*{1em}-- Tetrahedral Bloch-vector symmetry

  \item[\textbf{$\Downarrow$}] \emph{Abstract via group-theoretic formulation}

  \item[\textbf{[\ref{subsec:tetrahedral_bases}]}] \textbf{2-Qubit Tetrahedral Bases} \\
  \hspace*{1em}-- Entire family from $G_{\mathrm{tetra}}^{(2)}$-orbits of fiducial states \\
  \hspace*{1em}-- Basis completeness via Weyl twirling

  \item[\textbf{$\Downarrow$}] \emph{Generalization in two directions}

  \item[\textbf{[\ref{sec:multiqubit}]}] \textbf{Multi-Qubit Tetrahedral Bases} \\
  \hspace*{1em}-- Define $G_{\mathrm{tetra}}^{(n)} = \langle Z_i Z_{i+1},\, X^{\otimes n} \rangle \cong \mathbb{Z}_2^n$ \\
  \hspace*{1em}-- fiducial states = equal superposition of joint eigenstates
  \begin{itemize}
    \item[\textbf{[\ref{subsec:PPI}]}] \textbf{Permutationally Symmetric (PPI) Subfamily} \\
    \hspace*{2em}-- Dicke decomposition + bit-flip symmetry constraint
    \item[\textbf{[\ref{subsec:regular_geoms}]}] \textbf{Regular Geometric Subfamily} \\
    \hspace*{2em}-- Regular tetrahedral directions (3-qubit) \\
    \hspace*{2em}-- Rectangular (planar) Bloch-vector arrangements
  \end{itemize}

  \item[\textbf{[\ref{sec:higher-dim}]}] \textbf{Higher-Dimensional (Qudit) Generalizations} \\
  \hspace*{1em}-- $G_d^{(n)} = \langle Z_d^{(i)} Z_d^{(i+1)^{-1}},\, \dots,\, X_d^{\otimes n} \rangle \cong \mathbb{Z}_d^n$ \\
  \hspace*{1em}-- SUM-conjugated eigenbases yield generalized Bell states \\
  \hspace*{1em}-- fiducial states yield isoentangled, locally WH covariant configurations \\
  \hspace*{1em}-- Higher dimensional analogue of EJM~\cite{Czartowski2021} as a special case
 \item[\textbf{[\ref{sec:Localizability}]}]  \textbf{Localizability and Clifford Hierarchy} \\
    \hspace*{2em}-- Entanglement-assisted local implementation cost \\
    \hspace*{2em}-- EJM lies at level 2; explore other multi-qubit bases
\end{itemize}

\end{tcolorbox}

\centering\caption{High-level overview of the paper structure.}
\label{overview}
\end{figure*}

We conclude by discussing possible generalizations beyond the Pauli group, and by raising the question of whether other finite symmetry groups can yield locally covariant and efficiently localizable measurements. This opens new avenues for exploring the rich but largely uncharted terrain of structured entangled measurements. A high-level overview of the paper is given in Figure \ref{overview}.

\section{Group Covariant Measurements}

A powerful and widely studied~\cite{Holevo1982,Chiribella2004,Renes2004,Waldron2018} approach to constructing quantum measurements—or orthonormal bases for projective measurements—with desirable symmetry properties is to exploit the structure of group representations. Specifically, given a group \( G \) and a unitary representation \( U_G: G \rightarrow \mathcal{U}(\mathcal{H}),\, g\mapsto U_g \) acting on a Hilbert space \( \mathcal{H} \), one can construct an ensemble of quantum states from a single \emph{fiducial state}—that is, a reference state from which the rest of the measurement elements are generated—via its group orbit:
\begin{equation} \label{eq:basisdef}
    \ket{\psi_g} = U_g \ket{\psi}, \qquad \forall\,g \in G.
\end{equation}

The goal is to choose a fiducial state \( \ket{\psi} \), as well as an appropriate group \( G \) and representation \( U_G \), such that the resulting set \( \{ \ket{\psi_g} \}_{g \in G} \) defines a valid quantum measurement. In the case of projective measurements, this requires the resulting set to form an orthonormal basis. More generally, for a POVM, one demands the \emph{completeness condition}:
\begin{equation} \label{eq:completeness}
    \sum_{g \in G} U_g \ketbra{\psi}{\psi} U_g^\dagger = c \, \mathds{1},
\end{equation}
for some constant \( c > 0 \), which ensures that the rescaled set of operators forms a valid POVM. This condition asserts that the twirling of the fiducial state yields (a scalar multiple of) the maximally mixed state, meaning that the group action fully ``scrambles'' the fiducial state across the Hilbert space.

Measurements constructed in this manner are referred to as group covariant measurements~\cite{Chiribella2004}, and are particularly useful when the group \( G \) reflects a physical symmetry of the system. These constructions apply both to finite groups and to continuous Lie groups, in which case the summation in Eq.~\eqref{eq:completeness} is replaced by an integral with respect to the Haar measure, and the POVM may involve continuously many outcomes. Group-covariant measurements have been studied extensively in both settings: for compact Lie groups (e.g., \( SU(2), U(1) \))~\cite{Chiribella2004}, and for finite groups, where they connect to the theory of group frames~\cite{Renes2004,Waldron2018}.

A notable example is the construction of symmetric informationally complete POVMs (SIC-POVMs)~\cite{Lemmens1973}, which consist of \( d^2 \) unit vectors in a \( d \)-dimensional Hilbert space whose pairwise overlaps all have equal modulus. Zauner’s conjecture~\cite{Zauner1999, Renes2004} posits that such SIC-POVMs exist in every finite dimension and can be constructed as group orbits of a single fiducial vector under the action of the Weyl–Heisenberg group.

A central theoretical tool enabling these constructions is Schur’s lemma~\cite{Schur1905}. If \( U_G \) is an \emph{irreducible} representation—that is, one which admits no nontrivial invariant subspaces—then for any operator \( \mathcal{O} \) acting on \( \mathcal{H} \),
\begin{equation}
    \sum_{g \in G} U_g \mathcal{O} U_g^\dagger = \alpha \, \mathds{1},
\end{equation}
for some scalar \( \alpha \). Applied to the rank-one projector \( \mathcal{O} = \ketbra{\psi}{\psi} \), this implies that the completeness condition is automatically satisfied (with $c=\alpha$) for any choice of fiducial state whenever the representation is irreducible.

In the more general case where the representation is reducible (but multiplicity-free), Schur’s lemma implies 
\begin{equation} \label{eq:shur}
   \sum_{g \in G} U_g \mathcal{O} U_g^\dagger = \bigoplus_{\lambda} \alpha_\lambda \, \mathds{1}_\lambda \,,
\end{equation}
where the direct sum runs over the irreducible subspaces~\footnote{An irreducible subspace is one that carries an irreducible representation of \( G \), i.e., it does not contain any proper invariant subspaces.} $\mathcal{H}_\lambda$ of \( \mathcal{H} \),  \( \mathds{1}_\lambda \) denotes the identity operator on  $\mathcal{H}_\lambda$ and $\alpha_\lambda$ is a scalar attached to $\mathcal{H}_\lambda$. For the rank-one projector \( \mathcal{O} = \ketbra{\psi}{\psi} \) onto a fiducial state $\ket{\psi}$, the completeness condition imposes nontrivial constraints: it must have equal values $\alpha_\lambda$ in each irreducible component in order for the right-hand side of Eq.~\eqref{eq:shur} to be proportional to the identity on the full space. 
As it follows from Eq.~\eqref{eq:shur} that
\begin{align}
\alpha_\lambda 
&= \Tr\left[{\textstyle \sum_g} U_g^\dagger \,\Pi_\lambda\, U_g \,\ketbra{\psi}\right]/d_\lambda \nonumber \\
&= (|G|/d_\lambda)\,\Tr\left[\Pi_\lambda \,\ketbra{\psi}\right]
\end{align}
(where $\Pi_\lambda$ is the projector onto $\mathcal{H}_\lambda$, $d_\lambda$ is its dimension), then $\ket{\psi}$ must have a weight $\Tr [\Pi_\lambda \ketbra{\psi}]$ on each $\mathcal{H}_\lambda$ that is proportional to $d_\lambda$—in particular, if all $\mathcal{H}_\lambda$'s are of the same dimension (e.g., one-dimensional), $\ket{\psi}$ must have an equal weight on all of them. 

For completeness, we mention that the situation becomes more subtle if some irreducible representations appear with multiplicity, that is, if the same (up to unitary equivalence) representation occurs more than once in the decomposition into irreducible subspaces. In such cases, there is in general no canonical way to distribute the fiducial state across the degenerate subspaces, and additional symmetry or optimization may be required to construct a valid measurement~\cite{Holevo1982,Chiribella2004a,Chiribella2006}. We will however not encounter such situations in the present paper: all irreducible subspaces will appear exactly once, and will be 1-dimensional.

\section{Global vs. Local Encoding}

The group-covariant constructions discussed above are a powerful way to construct symmetric measurements. In multipartite quantum systems, however, one is often interested in constructing measurements that respect a \emph{local} symmetry (e.g. tetrahedral symmetry in the case of the EJM). Specifically, for a system of \( n \) parties, we consider representations of the form
\begin{equation} \label{eq:tensorform}
    U_G = U_{G_1} \otimes U_{G_2} \otimes \cdots \otimes U_{G_n},
\end{equation}
where each \( U_{G_i} \) acts only on the Hilbert space of the \( i^\text{th} \) subsystem. Such constructions define tensor product representations, and allow one to generate entire bases by applying local unitaries to a fixed fiducial state.

This motivates the notion of a locally encodable basis~\cite{Tanaka007,Pimpel2023}: a set of orthonormal states \( \{ \ket{\psi_j} \}_j \) is locally encodable if there exists a single fiducial state \( \ket{\psi} \in \mathcal{H}_1 \otimes \cdots \otimes \mathcal{H}_n \) and local unitaries \( U_j^{(1)}, \dots, U_j^{(n)} \) such that
\begin{equation}
    \ket{\psi_j} = \left(U_j^{(1)} \otimes \cdots \otimes U_j^{(n)}\right) \ket{\psi}, \quad \forall\,j.
\end{equation}
In this sense, classical labels can be mapped to quantum states using only local control, without requiring global entangling operations. 

Tanaka \emph{et al.}~\cite{Tanaka007} studied this property in the context of quantum encoding and showed that various classes of entangled states—such as Dicke states and stabilizer states—are locally encodable via Pauli operations. More recently, Ref.~\cite{Pimpel2023} investigated a related question: when can a given multipartite state generate an orthonormal basis in which all elements are locally unitarily equivalent to the seed (and thus in particular share the same entanglement structure). While such bases always exist in the bipartite case, they conjecture that this is no longer true for systems of three or more parties. 
We emphasize that local encodability, in general, may still require global classical coordination to implement the appropriate local unitaries across parties. However, in the special case where the encoding unitaries can be chosen independently by each party—i.e., without any coordination—the fiducial state is known to be locally maximally entanglable (LME)~\cite{Kraus2009}.

Importantly, in any locally encodable basis the entire basis automatically inherits the entanglement structure of the fiducial state. In this sense, such bases are \emph{isoentangled} by construction. The group-theoretic framework we develop here provides a systematic method for generating large, structured families of isoentangled, locally encodable bases beyond those previously known.

\section{The Elegant Joint Measurement as a Regular Tetrahedral Basis}
\label{sec:EJM_as_reg_tetra}

Before proceeding to more general constructions, we analyze the EJM from the perspective outlined above—namely, as a concrete example of a two-qubit measurement whose elements form a group orbit under a tensor product representation. This instructive example will serve as a guiding case throughout the remainder of the paper.

\subsection{The standard EJM}
\label{subsec:standard_EJM}

The 2-qubit EJM basis $\{\ket{\psi_{\mathrm{EJM},j}}\}_{j=1}^4$ can be constructed from anti-parallel vectors in the Bloch sphere,
\begin{equation}
    \ket{\psi_{\mathrm{EJM},j}}= \frac{\sqrt{3}+1}{2\sqrt{2}}\ket{\vec{m}_j,-\vec{m}_j}+\frac{\sqrt{3}-1}{2\sqrt{2}}\ket{-\vec{m}_j,\vec{m}_j}, \label{eq:def_EJM_states}
\end{equation}
where $\ket{\vec{m}}$ denotes the eigenket of a qubit state with Bloch vector $\vec{m}$, and the $\vec{m}_j$s point to the vertices of a regular tetrahedron,
\begin{align}
    \vec{m}_1 &= (+1,+1,+1)/\sqrt{3}\,, \nonumber \\
    \vec{m}_2 &= (+1,-1,-1)/\sqrt{3}\,, \nonumber \\
    \vec{m}_3 &= (-1,+1,-1)/\sqrt{3}\,, \nonumber \\
    \vec{m}_4 &= (-1,-1,+1)/\sqrt{3}\,.
\end{align}

Since all four states in the EJM basis have identical Schmidt coefficients, they share the same amount of bipartite entanglement and thus form an \emph{isoentangled basis}. For bipartite pure states, the Schmidt coefficients completely characterize entanglement and are invariant under local unitaries. Consequently, it follows that each state \( \ket{\psi_{\mathrm{EJM},j}} \) can be obtained from a fixed fiducial state \( \ket{\psi_{\mathrm{EJM}}} \)---which we may take to be \( \ket{\psi_{\mathrm{EJM},1}} \)---via local unitaries:
\begin{equation}
    \ket{\psi_{\mathrm{EJM},j}} = U_j^{(1)} \otimes U_j^{(2)} \ket{\psi_{\mathrm{EJM}}},
\end{equation}
for some single-qubit unitaries \( U_j^{(1)}, U_j^{(2)} \). While these unitaries are not unique, the existence of such a decomposition follows directly from the shared Schmidt spectrum.

Moreover, the reduced single-qubit states associated with the EJM basis elements (i.e., the marginals obtained by tracing out one qubit) have Bloch vectors pointing to the four vertices of the regular tetrahedron formed by the $\vec{m}_j$s. This geometric arrangement is highly symmetric: the Pauli group, through its action on the Bloch sphere, permutes the vertices of this tetrahedron. In particular, the group 
\begin{equation}
  G_{\mathrm{tetra}}^{(2)}
  \;=\;
  \bigl\langle
      Z \otimes Z,\;
      X \otimes X
  \bigr\rangle
  \;\cong\;
  \mathbb{Z}_2^2, \label{eq:G_EJM}
\end{equation}
where $Z,X$ (and later, $Y$) are the Pauli operators, acts transitively on the basis states, generating the full EJM as a group orbit of \( \ket{\psi_{\mathrm{EJM}}} \). Since the generators \( Z \otimes Z \) and \( X \otimes X \) commute, the group is abelian and all elements are simultaneously diagonalizable. The resulting symmetry at the global level reflects a tetrahedral structure at the local level, as the action of each group element corresponds to a rotation of the Bloch vectors of the reduced states.

\subsection{Two-qubit tetrahedral bases}
\label{subsec:tetrahedral_bases}

The above discussion motivates a natural generalization. We define a two-qubit \emph{tetrahedral basis}  as any orthonormal basis obtained from the group orbit of a two-qubit fiducial state under \( G_{\mathrm{tetra}}^{(2)} \). That is, a two-qubit basis is a tetrahedral basis if and only if it can be written as
\begin{equation}
    \left\{ U_g \ket{\psi} \;\middle|\; U_g \in G_{\mathrm{tetra}}^{(2)} \right\},
\end{equation}
for some fiducial state \( \ket{\psi} \in \mathbb{C}^2 \otimes \mathbb{C}^2 \), provided this set forms an orthonormal basis.

To ensure this last constraint, the fiducial state must satisfy the completeness condition:
\begin{equation} \label{eq:completeness-EJM}
    \sum_{U_g \in G_{\mathrm{tetra}}^{(2)}} U_g \ketbra{\psi}{\psi} U_g^\dagger = \mathds{1} \,.
\end{equation}
This averaging corresponds to a special case of the \emph{Weyl twirling channel}~\cite{Popp2024}, which projects onto the subspace of operators that are diagonal in the Bell basis. Hence, the only states satisfying Eq.~\eqref{eq:completeness-EJM} are those whose group orbit lies entirely within this subspace.

It follows that any equal-weight superposition of Bell states (with arbitrary relative phases) generates a valid tetrahedral basis. Explicitly, any state of the form
\begin{equation} \label{eq:bell-superposition}
    \ket{\psi_{\mathrm{tetra}}^{(2)}} = \frac{1}{2} \sum_{z,x \in \{0,1\}} e^{i \alpha_{z,x}} \ket{\Phi_{z,x}}
\end{equation}
satisfies the completeness requirement, where \( \alpha_{z,x} \in \mathbb{R} \) and \( \ket{\Phi_{z,x}} \) denote the Bell states
\begin{align}
    \ket{\Phi_{0,0}} & = \tfrac{1}{\sqrt{2}}\big(\!\ket{0,0}{+}\ket{1,1}\!\big), & 
    \!\!\!\ket{\Phi_{0,1}} & = \tfrac{1}{\sqrt{2}}\big(\!\ket{0,0}{-}\ket{1,1}\!\big), \notag \\ 
    \ket{\Phi_{1,0}} & = \tfrac{1}{\sqrt{2}}\big(\!\ket{0,1}{+}\ket{1,0}\!\big), & 
    \!\!\!\ket{\Phi_{1,1}} & = \tfrac{1}{\sqrt{2}}\big(\!-\ket{0,1}+\ket{1,0}\!\big),
    \label{eq:Bellbasis}
\end{align}%
or 
\begin{equation} \label{eq:BellFromFourier}
     \ket{\Phi_{z,x}} = \mathrm{CNOT}_{(2 \rightarrow 1)} \left( \ket{z}_Z \otimes \ket{x}_X \right), % \qquad z,x \in \{0,1\}
 \end{equation}
where $\ket{\cdot}_Z$ and $\ket{\cdot}_X$ denote eigenstates of the $Z$ and $X$ Pauli operators, respectively, and where $\mathrm{CNOT}_{(i \rightarrow j)}$ denotes a CNOT gate where qubit $i$ controls qubit $j$.

Fixing an arbitrary global phase, this defines a three-parameter family of admissible fiducial states. For instance, for the standard EJM defined through Eq.~\eqref{eq:def_EJM_states}, \( \ket{\psi_{\mathrm{EJM}}} = \ket{\psi_{\mathrm{EJM},1}} \) corresponds (up to the choice of the global phase) to
\begin{equation} \label{eq:EJM_fid}
    \ket{\psi_{\mathrm{EJM}}} = \frac{1}{2} \ket{\Phi_{0,0}} + \frac{i}{2} \ket{\Phi_{0,1}} - \frac{i}{2} \ket{\Phi_{1,0}} + \frac{i}{2} \ket{\Phi_{1,1}},
\end{equation}
i.e. with the phases \( \alpha_{0,0} = 0 \), \( \alpha_{0,1} = \tfrac{\pi}{2} \), \( \alpha_{1,0} = -\tfrac{\pi}{2} \), and \( \alpha_{1,1} = \tfrac{\pi}{2} \) in Eq.~\eqref{eq:bell-superposition}.

To make this structure fully transparent—especially with an eye toward generalization—it is useful to take a slightly more formal perspective. Since \( G_{\mathrm{tetra}}^{(2)} \cong \mathbb{Z}_2^2 \) is abelian, all its elements can be simultaneously diagonalized, and it admits a complete set of orthonormal eigenstates. 
This joint eigenbasis is precisely the Bell basis of Eq.~\eqref{eq:Bellbasis}, which may be obtained as in Eq.~\eqref{eq:BellFromFourier}, by mapping a simpler representation of $\mathbb{Z}_2^2$ to $G_{\mathrm{tetra}}^{(2)}$ via an appropriate intertwiner (see Appendix~\ref{app:eigenbases} for details).

This highlights why equal superpositions of Bell states emerge as the admissible fiducial states in the tetrahedral family: they correspond to uniform superpositions over eigenstates of an abelian group representation, just as in standard Fourier constructions~\cite{Gross2006}. This will serve as the basis for generalizations to multipartite or higher-dimensional settings.

Before moving to these generalizations, however, let us look more closely at the bases constructed from the fiducial state of Eq.~\eqref{eq:bell-superposition}. Considering the reduced states of a single qubit, by construction the four Bloch vectors obtained from the four basis elements form a tetrahedron inscribed in the Bloch sphere. The specific shape of each local tetrahedron depends on the relative phases in the fiducial state. More precisely, we can compute the Bloch vectors of qubits 1 and 2 for the state of Eq.~\eqref{eq:bell-superposition}, as
\begin{equation}
    \vec r_1 = \tfrac{1}{2}\big(\vec r_+ + \vec r_-\big) , \quad \vec r_2 = \tfrac{1}{2}\big(\vec r_+ - \vec r_-\big)
\end{equation}
with
\begin{equation}
  \vec r_+ =
  \begin{pmatrix}
    \cos(\alpha_{1,0} - \alpha_{0,0}) \\[4pt]
    \sin(\alpha_{1,0} - \alpha_{0,1}) \\[4pt]
    \cos(\alpha_{0,1} - \alpha_{0,0})
  \end{pmatrix}\!, \ 
  \vec r_- =
  \begin{pmatrix}
    \cos(\alpha_{1,1} - \alpha_{0,1}) \\[4pt]
    \sin(\alpha_{1,1} - \alpha_{0,0}) \\[4pt]
    -\cos(\alpha_{1,1} - \alpha_{1,0})
  \end{pmatrix}\!.
\end{equation}
The remaining vertices of the local tetrahedra are generated by conjugating \(\vec r_1\) or \(\vec r_2\) with the Pauli reflections \(X, Y, Z\). Since reflections preserve distances and angles, the resulting tetrahedra have opposite edge pairs of equal length and are known as disphenoids~\cite{coxeter1973}. While the shapes of the tetrahedra on qubits 1 and 2 may differ in general, their Bloch vectors satisfy \(\|\vec{r}_1\| = \|\vec{r}_2\|\), so both tetrahedra are inscribed in spheres of equal radius (i.e., they share the same circumradius).
Interestingly one may note that they also have the same volume, $V\big(\vec r_i=(x_i,y_i,z_i)\big)=\frac{1}{6}\big|\det\!\big[(0,2y_i,2z_i),(2x_i,0,2z_i),(2x_i,2y_i,0)\big]\big|$ $=\frac{8}{3}|x_iy_iz_i|$ (indeed it can be verified that for $\vec r_1$ and $\vec r_2$ above, $x_1y_1z_1 +x_2y_2z_2=0$).

\subsection{Regular tetrahedral bases}

Among the above large family of two-qubit tetrahedral bases, the EJM corresponds to a special case in which the local Bloch vectors define regular tetrahedra. We now ask under which conditions the fiducial state leads to such regular local structures.

A natural first attempt is to align the Bloch vectors on both qubits,
\begin{equation} \label{eq:equal-bloch}
    \vec r_1 = \vec r_2.
\end{equation}
From the expressions \( \vec r_1 = \frac12(\vec r_+ + \vec r_-) \) and \( \vec r_2 = \frac12(\vec r_+ - \vec r_-) \), this condition requires \( \vec r_- = 0 \). However, one can see that \( \vec r_- = 0 \) in turn implies \( \vec r_+ = 0 \): indeed, setting for instance \( \alpha_{1,1} = 0 \), \( \vec r_- = 0 \) implies $\alpha_{0,0} = 0$ or $\pi$, $\alpha_{0,1} = \pm\frac{\pi}{2}$ and $\alpha_{1,0} = \pm\frac{\pi}{2}$, which imply $\cos(\alpha_{1,0} - \alpha_{0,0}) = \sin(\alpha_{1,0} - \alpha_{0,1}) = \cos(\alpha_{0,1} - \alpha_{0,0}) = 0$. Hence, the only possible solution for \( \vec r_1 = \vec r_2 \) is to have both Bloch vectors of length 0—i.e., no (nontrivial) regular tetrahedron arises.

By contrast, imposing instead that the Bloch vectors are  opposite,
\begin{equation} \label{eq:opposite-bloch}
    \vec r_1 = -\vec r_2,
\end{equation}
requires \( \vec r_+ = 0 \), but now allows for \( \vec r_- \ne 0 \). Setting for instance $\alpha_{0,0} = 0$, $\vec r_+ = 0$ implies $\alpha_{0,1} = \pm\frac{\pi}{2}$ and $\alpha_{1,0} = \pm\frac{\pi}{2}$, which then gives $\vec r_1 = -\vec r_2 = \frac12\vec r_- = \frac{\sin\alpha_{1,1}}{2}(\pm1,1,\pm1)$. Hence we get Bloch vectors that point along a body diagonal of the Bloch cube (i.e., in some direction \( (\pm 1, \pm 1, \pm 1) \))---from which, acting with the local Pauli operators \( X, Y, Z \) generates the vertices of a regular tetrahedron.
Thus, this anti-alignment condition is sufficient to generate mirrored regular tetrahedra on both qubits.

For the particular choice of phases
\begin{equation}
\label{eq:cyril-family-phases}
\alpha_{0,0} = 0, \ \
\alpha_{0,1} = \tfrac{\pi}{2}, \ \
\alpha_{1,0} = -\tfrac{\pi}{2}, \ \
\alpha_{1,1} = \theta + \tfrac{\pi}{2},
\end{equation}
 one gets the fiducial state
\begin{equation}
\label{eq:cyril-family}
\ket{\psi_{\rm EJM}^\theta} = \frac{1}{2}
\begin{bmatrix}
e^{i \frac{\pi}{4}} \\
-i \frac{1 + e^{i\theta}}{\sqrt{2}} \\
-i \frac{1 - e^{i\theta}}{\sqrt{2}} \\
e^{-i \frac{\pi}{4}}
\end{bmatrix},
\end{equation}
for which $\vec r_1 = -\vec r_2 = \frac{\cos\theta}{2}(1,1,1)$ (of length $\frac{\sqrt{3}}{2}\cos\theta$). This fiducial state generates the one-parameter family introduced in Ref.~\cite{Tavakoli2021}, that interpolates between the canonical EJM (for \( \theta = 0 \)) and the Bell basis (for \( \theta = \frac{\pi}{2} \), in which case the Bloch vectors vanish).

\section{Multi-qubit tetrahedral bases}
\label{sec:multiqubit}

As we saw, the group-theoretic reformulation of the EJM leads rather naturally to a broader class of two-qubit bases with tetrahedral symmetry. In particular, by identifying the EJM as the orbit of a fiducial state under an abelian subgroup of local unitaries, we have abstracted its structure in a way that generalizes to larger systems.

We now consider the extension of this construction to \( n \)-qubit systems. We require that the single-qubit marginals of each basis state exhibit local tetrahedral symmetry—replicating the Bloch vector structure of the EJM on each qubit.

This motivates the definition of \emph{multi-qubit tetrahedral bases}: orthonormal bases formed as the orbit of a fiducial state under a group \( G \subset \mathrm{SU}(2)^{\otimes n} \), where the local action of each group element preserves the tetrahedral configuration on every qubit.

To ensure the completeness condition is satisfied, we will focus on abelian groups \( G \), for which Schur’s lemma guarantees that a uniform superposition over the orbit yields the identity operator. This leads to two design criteria for the group \( G \):

\begin{enumerate}
    \item[(i)] \textbf{Local Pauli support:} Each tensor factor must consist of single-qubit Pauli operators only (\(\{\mathds{1}, X, Y, Z\}\)), and all Paulis must appear on each qubit. This ensures that, if the fiducial state has nonzero components along all Pauli directions, the group orbit generates a full tetrahedral pattern in each local Bloch sphere, where for $n\ge 3$, the same Bloch vectors will be obtained from several basis states. If a Bloch vector is confined to a plane for which one of the $X,Y,Z$ components is null (e.g., the \(X\!-\!Z\) plane), the local action still yields rectangular symmetry within that plane. If a Bloch vector is aligned with one of the $X,Y,Z$ directions, then the local action gets restricted to that single direction.
    
    \item[(ii)] \textbf{Abelian tensor structure:} The full representation of \(G\) should be abelian—ideally isomorphic to \(\mathbb{Z}_2^n\). This guarantees that all group elements are simultaneously diagonalizable, and any uniform superposition over a joint eigenbasis yields a valid orthonormal basis.
\end{enumerate}

These criteria naturally lead us to consider abelian subgroups of the \(n\)-qubit Pauli group, which are well studied in the context of stabilizer codes~\cite{Gottesman1997}. However, our goal here is different: instead of constructing states that are \emph{invariant} under a stabilizer group, we seek states that are \emph{covariant}—that is, states whose orbit under \(G\) spans a complete orthonormal basis.

There are exponentially many commuting \( n \)-tuples of Pauli operators that generate groups isomorphic to \( \mathbb{Z}_2^n \). Among these, we select one convenient representative that satisfies both criteria (i) and (ii). Our choice is not unique—any other group related by local Clifford unitaries or single-qubit Pauli rotations would serve equally well—but it is the simplest construction that fulfills our requirements.

Let \( Z^{(i)} \) and \( X^{(i)} \) denote the Pauli operators acting on qubit \( i \). We define the group
\begin{equation} \label{eq:ourSubgroup}
\begin{aligned}
G_{\mathrm{tetra}}^{(n)}
\;&=\;
\left\langle
Z^{(1)} Z^{(2)},
Z^{(2)} Z^{(3)},
\ldots,
Z^{(n-1)} Z^{(n)},
X^{\otimes n}
\right\rangle \\
&\cong\;
\mathbb{Z}_2^n,
\end{aligned}
\end{equation}
which consists of \( 2^n \) commuting \( n \)-qubit Pauli strings and meets both of the design criteria stated above.

The group $G_{\mathrm{tetra}}^{(n)}$ is abelian. We present in Appendix~\ref{app:eigenbases} a possible approach to obtain the eigenbasis in which all its elements can be diagonalized, showing that their eigenstates can be written in the form \cite{Toth2005}
\begin{equation}
\label{eq:FourierTetra}
    \ket{\Phi_{\vec{z}, x}}
    = S(n)\;
    \Bigg[ \bigotimes_{i=1}^{n-1} \ket{z_i}_Z \otimes \ket{x}_X \Bigg], % \quad (\vec{z},x_n) \in \{0,1\}^n
\end{equation}
for all $\vec z=(z_1,\ldots,z_{n-1})\in\{0,1\}^{n-1}, x\in\{0,1\}$,  where we introduced the operator
\begin{equation}
\label{eq:CNOTchain}
S(n) = \mathrm{CNOT}_{(2 \to 1)}\cdots\mathrm{CNOT}_{(n \to n-1)}\,.
\end{equation}
More explicitly, we find that the eigenstates $\ket{\Phi_{\vec{z}, x}}$ are of the GHZ form $\frac{1}{\sqrt{2}}(\ket{j_1,\ldots,j_n}\pm\ket{j_1\oplus1,\ldots,j_n\oplus1})$ (where $\oplus$ denotes addition modulo 2).

As in the previous bipartite case, any \emph{equal-weight} superposition of these eigenvectors---i.e., any state of the general form \begin{equation}
\label{eq:fiducialCompact}
  \ket{\psi_{\mathrm{tetra}}^{(n)}}
  = \frac{1}{\sqrt{2^n}}
  \sum_{(\vec{z},x) \in \{0,1\}^n} 
      e^{i \alpha_{\vec{z}, x}}
      \ket{\Phi_{\vec{z}, x}},
\end{equation}
with \( \alpha_{\vec{z}, x} \in \mathbb{R} \)
defines a valid fiducial state.

Acting with the group elements of \( G_{\mathrm{tetra}}^{(n)} \) on this fiducial state produces a full multi-qubit tetrahedral basis. The completeness relation
\begin{equation}
  \sum_{U_g \in G_{\mathrm{tetra}}^{(n)}} U_g\, \ket{\psi_{\mathrm{tetra}}^{(n)}}\!\bra{\psi_{\mathrm{tetra}}^{(n)}}\, U_g^\dagger
  \;=\;
  \mathds{1}
\end{equation}
then follows directly from Schur’s lemma.

\subsection{Party-Permutation-Invariant fiducial states}
\label{subsec:PPI}

A natural question is whether the family of \(n\)-qubit tetrahedral bases admits a form of
party-permutation symmetry. Concretely, we ask for fiducial states
that are invariant under all permutations of the qubit subsystems:
\begin{equation}
\label{eq:perm-sym-def}
U_\sigma \ket{\psi_{\mathrm{PPI}}^{(n)}} = \ket{\psi_{\mathrm{PPI}}^{(n)}} \quad
\forall\,\sigma \in S_n,
\end{equation}
where \(U_\sigma\) denotes the unitary that permutes the tensor factors according to \(\sigma\in S_n\) (and with $S_n$ denoting the symmetric group). Note that while party-permutation symmetry of the fiducial does not imply that each basis state is itself party-permutation symmetric, all qubits share the same set of possible single-qubit marginals across the basis. Moreover, imposing party-permutation symmetry on the fiducial state greatly reduces the number of free parameters needed to specify it (and hence the entire basis). 

Since any fully symmetric state lies in the span of Dicke states, we consider fiducial states of the form
\begin{equation}
\label{eq:PPI-Dicke}
\ket{\psi_{\mathrm{PPI}}^{(n)}} = \sum_{k=0}^{n} a_k\, e^{i \alpha_k} \ket{D_k} \,,
\end{equation}
where \(\ket{D_k} := \sum_{\vec z:\mathrm{wt}(\vec z)=k} \ket{\vec z}\) denotes the supernormalized Dicke state of weight \(k\) (with $\mathrm{wt}(\cdot)$ denoting the Hamming weight), such that \(\norm{\ket{D_k}}^2 = \binom{n}{k}\), the \(a_k \in \mathbb{R}\) are amplitudes, and \(\alpha_k \in \mathbb{R}\) are phases.

We analyze in Appendix~\ref{app:PPI} the conditions under which the $G^{(n)}_{\mathrm{tetra}}$-orbit of $\ket{\psi_{\mathrm{PPI}}^{(n)}}$ forms an orthonormal basis.
We prove that solutions exist only for odd \(n\), in which case the amplitudes and phases can most generally be taken of the form
\begin{align}
& a_k = \frac{\cos \theta_k}{2^{(n-1)/2}}, \quad a_{n-k} = \frac{\sin \theta_k}{2^{(n-1)/2}}, \quad \alpha_{n-k} = \alpha_k + \frac{\pi}{2}, \notag \\[1mm]
& \hspace{42mm} \text{for } k=0,\dots,\tfrac{n-1}{2}, \label{eq:psiPPI_amplitudes_phases}
\end{align}
for some real parameter $\theta_k$. No such state exists for even \(n\).

We note that the requirement of Eq.~\eqref{eq:perm-sym-def} is rather strong.  One could relax it by demanding invariance only up to a global phase, or even by requiring merely that all single-qubit marginals coincide.  We conjecture that, for even $n$, no solution exists even under such a relaxed assumption. 
For odd $n$, it remains an open question whether there are solutions to the relaxed condition that are genuinely different from the one constructed above.

We illustrate the three-qubit case explicitly in the next section.

\subsection{Regular geometric bases}
\label{subsec:regular_geoms}

Within the general family of tetrahedral measurement bases, some fiducial states give rise to particularly symmetric configurations of Bloch vectors. These include regular tetrahedra—where the Bloch vectors form equidistant points on the Bloch sphere—as well as rectangular arrangements lying in a plane. In this subsection, we explore such geometrically structured bases. We start by focusing on the three-qubit case.

\subsubsection{Three-qubit regular tetrahedral bases}

In the two-qubit EJM, the single-qubit reductions form a pair of regular tetrahedra that are reflections of each other through the origin. For three qubits, we can consider two distinct options: 
\begin{enumerate}
    \item Bases where each qubit's reduction lies on a regular tetrahedron with the \emph{same} orientation.
    \item Bases where the single-qubit reductions are split between two opposite regular tetrahedra, reflections of each other through the origin, as in the two-qubit EJM. That is, some qubits have reductions aligned with one tetrahedron, while others align with its reflected counterpart.
\end{enumerate}

\begin{figure*}
    \centering
    \includegraphics[width=0.8\textwidth]{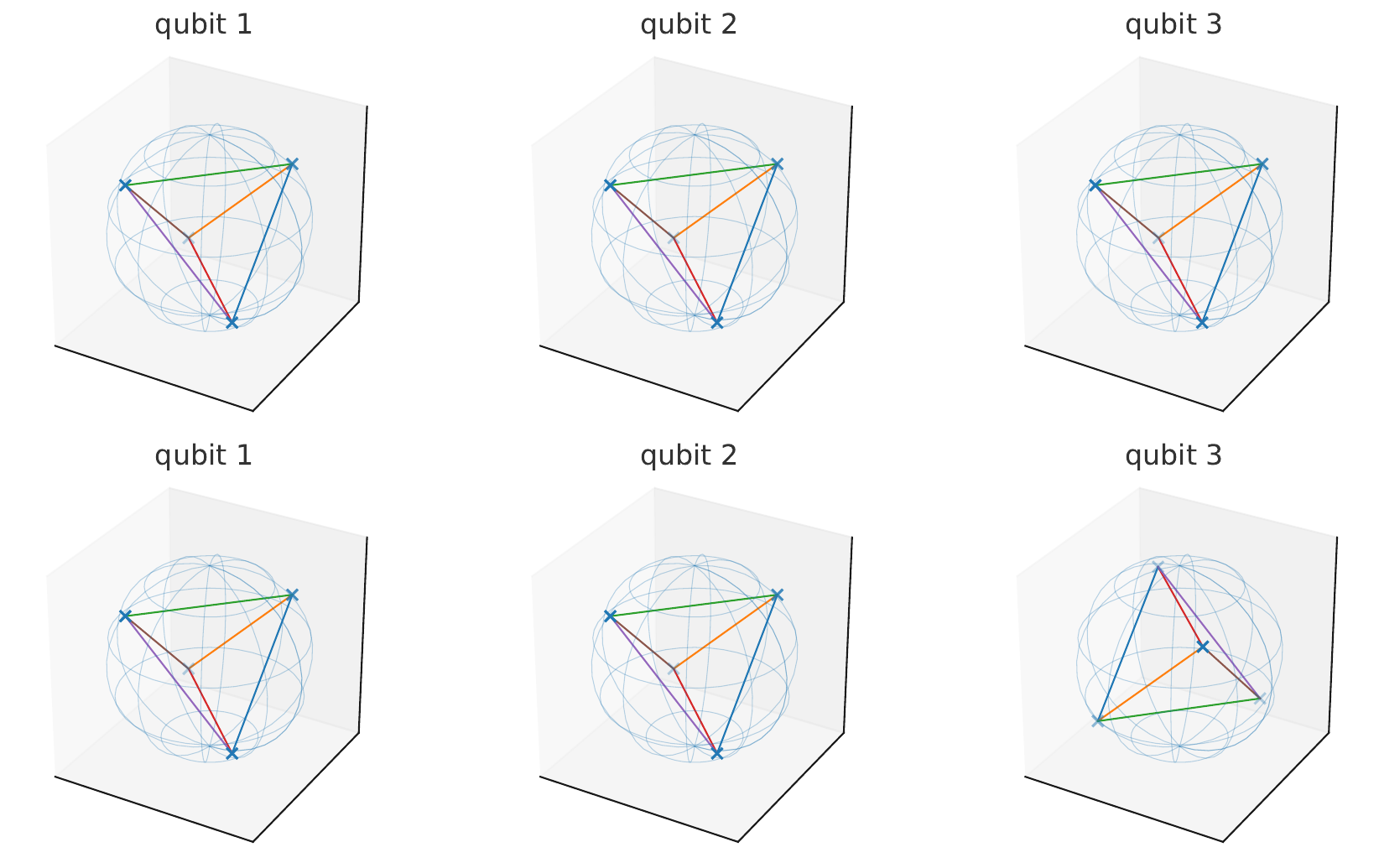}
    \caption{
 Configurations of local Bloch vectors generated by the three-qubit fiducial states of Eqs.~\eqref{eq:3qPPI} (top) and~\eqref{eq:ex_psi_both_orient} (bottom). Top: all three qubits share the same tetrahedral orientation. Bottom: qubits 1 and 2 share the same orientation, qubit 3 has the opposite orientation. Each configuration shows the four vertices of a regular tetrahedron inscribed in the Bloch sphere; for each qubit every Bloch vector occurs in two basis states and the pairing of states differs between qubits.}
    \label{fig:3qubitEJM}
\end{figure*}

\paragraph{Party-permutation invariant states}
We begin with the class of Pauli-encoded, party-permutation invariant (PPI) fiducial states. For three qubits, we have shown in the previous section that (for a fixed choice of a global phase) we can write
\begin{equation} \label{eq:3qubitPPI}
\ket{\psi_{\mathrm{PPI}}^{(3)}} = \frac{1}{2}
\begin{bmatrix}
\cos \theta_0 \\
e^{i\alpha_1} \,\cos \theta_1 \\
e^{i\alpha_1} \,\cos \theta_1 \\
i\,e^{i\alpha_1} \,\sin \theta_1 \\
e^{i\alpha_1} \,\cos \theta_1 \\
i\,e^{i\alpha_1} \,\sin \theta_1 \\
i\,e^{i\alpha_1} \,\sin \theta_1 \\
i\, \sin \theta_0
\end{bmatrix}.
\end{equation}
This state yields Bloch vectors of the form
\begin{equation}
  \vec r_{\mathrm{PPI}(3)} =
  \frac{1}{2}
  \begin{pmatrix}
    \cos (\alpha_1 ) \cos (\theta_0-\theta_1) \\[4pt]
    \sin (\alpha_1 ) \cos (\theta_0+\theta_1) + \sin (2\theta_1) \\[4pt]
    \cos (\theta_0+\theta_1) \cos (\theta_0-\theta_1)
  \end{pmatrix}.
\end{equation}

To ensure that the Bloch vector lies along the direction \( (1,1,1) \), so as to obtain a regular tetrahedron, we require all three components of $\vec r_{\mathrm{PPI}(3)}$ to be equal. This yields a nonlinear constraint on the angles \(\theta_0, \theta_1, \alpha_1\), which reduces to a degree-eight polynomial equation in their cosine. While not analytically solvable in general, numerical solutions can be found.

A simple solution is for instance
\begin{equation}
\ket{\psi} = \tfrac{1}{2}\big[
1,
\tfrac{9+3i}{10},
\tfrac{9+3i}{10},
\tfrac{-1+3i}{10},
\tfrac{9+3i}{10},
\tfrac{-1+3i}{10},
\tfrac{-1+3i}{10},
0
\big]^T, \label{eq:3qPPI}
\end{equation}
obtained for $\theta_0=0, \theta_1=\alpha_1=\arctan \frac{1}{3}$.
The Bloch structure of the corresponding basis is illustrated in Fig.~\ref{fig:3qubitEJM}~(top). The Bloch vector of each qubit reduced state is $(\frac{9}{20},\frac{9}{20},\frac{9}{20})$, of length \(  9\sqrt{3}/20 \approx 0.779 \). For comparison, we found numerically that for this configuration, the state with minimal entanglement has a local Bloch vector length of approximately \( 0.785 \), corresponding to slightly lower entanglement.

\paragraph{Reductions with both orientations}

Alternatively, we can construct states where the Bloch vectors  align along the two opposite tetrahedral orientations. For three qubits, this means two single-qubit Bloch vectors point toward \( (1,1,1) \) while the third one points toward \( (-1,-1,-1) \), or vice versa.

One such fiducial state is for instance
\begin{equation} \label{eq:ex_psi_both_orient}
   \ket{\psi} = \frac{1}{2\sqrt{10}} \big[3,-3-i,1,-3i,1,-3i,0,-i\big]^T,
\end{equation} 
which again yields Bloch vectors $\pm(\frac{9}{20},\frac{9}{20},\frac{9}{20})$, of length \(  9\sqrt{3}/20 \approx 0.779 \).  The first and second qubits point towards \( (1,1,1) \), while the third one points towards \( (-1,-1,-1) \). The Bloch vector structure of the corresponding basis is illustrated in Fig.~\ref{fig:3qubitEJM}~(bottom).

As before, one can numerically find solutions with slightly reduced entanglement, but these too involve high-degree polynomials that are not easily solved analytically.

\subsubsection{Planar, rectangular configurations}

A particularly simple and analytically tractable family of states arises when the fiducial state is restricted to the real subspace. In this case, the Bloch vectors lie entirely in the \( X-Z \) plane, and their arrangement under the Pauli group action forms a rectangle. 

The following real-valued fiducial state corresponds geometrically to a configuration of rectangles for any number of qubits:
\begin{equation}
\ket{\psi_\mathrm{rect}^{(n)}} =
\frac{1}{\!\sqrt{2^{n-1}}} \big[\underbrace{0,\, 0,\, \ldots,\, 0}_{2^{n-1}-1},\, 1 ,\, 0 ,\, \underbrace{1,\, 1,\, \ldots,\, 1}_{2^{n-1}-1} \big]^T.
\end{equation}
This is a real state supported only on computational basis vectors  of the form $\ket{0,1,1,\ldots,1}$ or $\ket{1,j_2,j_3,\ldots,j_n}$ with $(j_2,j_3,\ldots,j_n)\neq(0,0,\ldots,0)$. Each generator $Z^{(i)} Z^{(i+1)}$ flips the sign of a basis element if and only if the adjacent bit values at positions $i,i+1$ are different; over the support these $\pm1$ contributions cancel exactly (the single imbalance from excluding $\ket{1,0,\ldots, 0}$ is compensated by the vector $\ket{0,1,\ldots, 1}$), leading to zero expectation. Similarly, the global operator \(X^{\otimes n}\) maps every basis string to its bitwise complement, which lies outside the support. Hence all off-diagonal overlaps vanish, and the orbit defines a valid orthonormal basis.

For \( n = 2 \), the fiducial state is separable. For \( n > 2 \), the resulting states are partially entangled, with the degree of entanglement, measured by single-qubit entropy, decreasing with \(n\) (or equivalently, with local Bloch vector lengths increasing with $n$). 
Geometrically, for each qubit the four Pauli-conjugate points lie in the $X\!-\!Z$ plane and form a rectangle with side lengths $2\cdot\frac{4}{2^n}$ and $2\cdot\big(1-\frac{4}{2^n}\big)$; for the first qubit these two sides are interchanged with respect to the other qubits. This rectangle is a square for $n=3$.

This construction thus yields a simple $n$-qubit family of real-valued tetrahedral bases with rectangular symmetry in the local Bloch planes.

\section{Higher dimensions}
\label{sec:higher-dim}

A natural question is whether the group-theoretic construction underlying the EJM admits a meaningful extension to higher local dimensions. Ref.~\cite{Czartowski2021} proposed a generalization consisting of an orthonormal basis in \( \mathbb{C}^d \otimes \mathbb{C}^d \) that maximizes the single-party distinguishability of the basis vectors, meaning that the \( d^2 \) reduced density matrices on one subsystem are as far apart as possible in trace distance. Geometrically, this corresponds to the reductions forming a regular simplex in the space of density operators. When the reductions are pure, this condition defines a symmetric informationally complete measurement (SIC-POVM).
For \( d=2 \), the optimal basis coincides with the two-qubit EJM.

Explicitly, for every dimension $d$ where a SIC-POVM \( \{ \ket{\varphi_{j,k}} \}_{j,k=1}^{d} \) exists, Ref.~\cite{Czartowski2021} defined the two-qu$d$it basis
\begin{equation}
\label{eq:czartowski}
\ket{\psi^{(d)}_{j,k}} = \sqrt{\tfrac{d+1}{d}}\, \ket{\varphi_{j,k}} \otimes \ket{\varphi_{j,k}^\ast} - q\, e^{i\alpha} \ket{\Phi^+_d},
\end{equation}
where $\cdot^\ast$ denotes complex conjugation in the computational basis, and \( \ket{\Phi^+_d} = \frac{1}{\sqrt{d}} \sum_{j=0}^{d-1} \ket{j} \otimes \ket{j} \) is the maximally entangled state. 
To ensure that the states $\ket{\psi^{(d)}_{j,k}}$ define an orthonormal basis, the coefficients $q,\alpha$ must be related through
\begin{equation}
\label{eq:czartowski-coeffs}
    \cos\alpha =\frac{dq^2+1}{2q\sqrt{d+1}}
\end{equation}
(with $\frac{\sqrt{d+1}-1}{d}\leq q\leq \frac{\sqrt{d+1}+1}{d}$, $\sqrt{\frac{d}{d+1}}\le\cos\alpha\le1$); these together control the degree of entanglement.

Many known SIC-POVMs are covariant under the Weyl-Heisenberg (WH) group, meaning that each of its vectors can be written as
\begin{equation}
\ket{\varphi_{j,k}} = X_d^j Z_d^k \ket{\varphi_{\mathrm{fid}}},
\end{equation}
for some fixed fiducial state \( \ket{\varphi_{\mathrm{fid}}} \). Here, \( X_d \) and \( Z_d \) are the generalized shift and phase operators acting as
\begin{equation}
Z_d \ket{j} = \omega_d^j \ket{j}, \quad X_d \ket{j} = \ket{j\oplus 1}\,.
\end{equation}
where $\omega_d = e^{2i\pi/d}$ and $\oplus$ now denotes addition modulo $d$.
In the computational basis, they satisfy
\begin{equation}
Z_d^* = Z_d^{-1}, \qquad X_d^* = X_d,
\end{equation}
so complex conjugation inverts the phase operator but leaves the shift operator invariant.

It follows that each product state \( \ket{\varphi_{j,k}} \otimes \ket{\varphi_{j,k}^\ast} \) transforms covariantly under the group $\{U_g\otimes U_g^*|U_g\in WH\}$:
\begin{equation}
\ket{\varphi_{j,k}} \otimes \ket{\varphi_{j,k}^\ast}
= (X_d^j Z_d^k) \otimes (X_d^j Z_d^k)^* \left( \ket{\varphi_{\mathrm{fid}}} \otimes \ket{\varphi_{\mathrm{fid}}^\ast} \right).
\end{equation}
Moreover, \( \ket{\Phi^+_d} \) is invariant under \( U \otimes {U}^\ast \) for all unitaries \( U \), including WH operators. This implies that the full measurement basis can be written as a group orbit:
\begin{equation}
\label{eq:czartowski-orbit}
\ket{\psi^{(d)}_{j,k}} = (X_d^j Z_d^k) \otimes (X_d^j Z_d^k)^* \ket{\psi^{(d)}},
\end{equation}
where \( \ket{\psi^{(d)}} := \ket{\psi^{(d)}_{0,0}} \) from Eq.~\eqref{eq:czartowski}. Note that this fiducial state is not the SIC fiducial.

This representation makes it manifest that the measurement defined in Ref.~\cite{Czartowski2021} is covariant under the action of $\{U_g\otimes U_g^*|U_g\in WH\}$. As such, it fits naturally within our group-theoretic framework: the entire basis \( \{ \ket{\psi^{(d)}_{j,k}} \} \) is obtained as a group orbit of a fiducial state under an abelian symmetry group. This mirrors the EJM construction for qubits and motivates a broader generalization to any (uniform) local dimension.

Generalizing $G_{\mathrm{tetra}}^{(n)}$ from Eq.~\eqref{eq:ourSubgroup}, we thus define the $n$-qu$d$it abelian Pauli subgroup
\begin{equation} \label{eq:qudit-subgroup-n}
\begin{aligned}
G_d^{(n)}
\;&=\;
\left\langle
Z_d^{(1)} {Z_d^{(2)}}^*,
Z_d^{(2)} {Z_d^{(3)}}^*,
\ldots,
Z_d^{(n-1)} {Z_d^{(n)}}^*,
X_d^{\otimes n}
\right\rangle \\
&\cong\;
\mathbb{Z}_d^n.
\end{aligned}
\end{equation}
To construct the joint eigenbasis of its elements we follow a similar approach as in the qubit case, see Appendix~\ref{app:eigenbases}. We obtain eigenstates of the form 
\begin{equation}
\label{eq:Fourier-qudit-final}
    \ket{\Phi_{\vec{z}, x}^{(d)}}
    = S_d(n)\;
    \Bigg[ \bigotimes_{i=1}^{n-1} \ket{z_i}_{Z_d} \otimes \ket{x}_{X_d} \Bigg], % \quad (\vec{z},x_n) \in \{0,1\}^n
\end{equation}
for all $\vec z\in\mathbb{Z}_d^{n-1}, x\in\mathbb{Z}_d$, where $\ket{\cdot}_{Z_d}$ and $\ket{\cdot}_{X_d}$ denote eigenstates of the $Z_d$ and $X_d$ operators, respectively, and with now the operator
\begin{equation}
\label{eq:SUMchain}
S_d(n) = \mathrm{SUM}^{(d)}_{(2 \to 1)}\cdots\mathrm{SUM}^{(d)}_{(n \to n-1)}
\end{equation}
in which each $\mathrm{SUM}^{(d)}_{(i+1 \to i)}$ is a SUM gate acting as \(\mathrm{SUM}_{(i+1 \to i)} \ket{j,k} = \ket{j\oplus k, k}\) on qudits $i$ and $i+1$. The eigenstates $\ket{\Phi_{\vec{z}, x}^{(d)}}$ are found to be higher-dimensional Bell- or GHZ-type of states.

Any equal-weight superposition of these vectors then defines a fiducial state whose group orbit spans a complete orthonormal basis:
\begin{equation}
\label{eq:genFid}
\ket{\psi_d^{(n)}} =
\frac{1}{\sqrt{d^n}} \sum_{\vec{z},\,x}
e^{i \alpha_{\vec{z},x}} \ket{\Phi^{(d)}_{\vec{z},x}}.
\end{equation}
This defines a large family of entangled bases with local WH invariance which includes the construction of Ref.~\cite{Czartowski2021} as a special case.

For instance, for $d=3$ (and coming back to the $n=2$ case), writing the Hesse SIC fiducial state~\cite{Dang2013} 
\begin{equation}
\ket{\varphi_{\mathrm{SIC}}} = \frac{1}{\sqrt{2}} \begin{bmatrix} 1 \\ -e^{i \theta} \\ 0 \end{bmatrix},
\end{equation}
with $\theta$ a free real parameter
into Eq.~\eqref{eq:czartowski} with $q=1/3$ and $\alpha = 0$ (corresponding to the most distinguishable local marginals), the overlaps with the eigenbasis yield relative phase angles:
\begin{equation}
\left[\alpha_{z,x}\right] =
\begin{bmatrix}
\phantom{-}0        & \tfrac{\pi}{3} & -\tfrac{\pi}{3} \\[1mm]
\theta+\pi & \theta+\pi & \theta+\pi \\[1mm]
-\theta+\pi & -\theta-\tfrac{\pi}{3}         & -\theta+\tfrac{\pi}{3}
\end{bmatrix}.
\end{equation}
This recovers, as a special case of our general construction, the $d=3$ phase structure underlying the examples of Ref.~\cite{Czartowski2021}.

\section{Localizability}
\label{sec:Localizability}

\begin{figure}[ht!]
    \centering
    \includegraphics{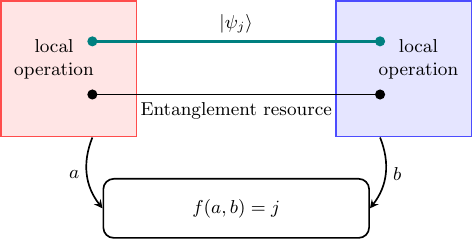}
    \caption{Illustration of localization of a joint measurement for two parties. 
    An unknown joint basis state \( \ket{\psi_j} \) is shared between two parties, who also possess a shared entanglement resource. Each party applies a local operation resulting into classical outcomes $a$ and $b$, which are used to deterministically compute the measurement result \( f(a,b) = j \). }
    \label{fig:localize}
\end{figure}

All bases introduced in this work are, by design, locally \emph{encodable}: each basis state can be prepared from a single fiducial state via local unitary transformations. However, the inverse task—decoding or ``measuring'' the classical label \footnote{Note that from the linearity of the Born rule, if one is able to distinguish the eigenstates, then one is also able to reproduce all statistics of the measurement} of an unknown basis state using only local operations—is generally impossible unless the basis only consists of product states. In such cases, distinguishing basis elements requires global operations~\cite{Popescu1994}. Nonetheless, if the decoding parties have access to pre-shared entanglement, the measurement may be implemented using only local quantum operations followed by broadcast communication—that is, in a scenario where only joint classical postprocessing of local outcomes is permitted~\cite{Vaidman2003,Groisman2003}, see Fig.~\ref{fig:localize}.

Ref.~\cite{Pauwels2025} introduced a framework that classifies projective measurements according to their \emph{localization cost}—the minimum amount of pre-shared entanglement needed to realize the measurement with local operations and classical postprocessing. The classification is organized into operational “levels’’ built from a finite-round adaptation of Vaidman’s scheme~\cite{Vaidman2003}: starting from the unknown joint state, one performs a fixed number of cyclic, back-and-forth blind teleportations (i.e., teleportations where the Pauli corrections are not communicated), and then a single joint measurement at the final receiver, with all parties classically combining their local data at the end. The number of such rounds defines the level in the classification: here we take level $k$ to correspond to $k-1$ rounds of back-and-forth blind teleportations (so that the lowest  level, $k=1$, corresponds to no round of teleportation at all)~\footnote{In~\cite{Pauwels2025} we used a convention where the base level was numbered~0. Here we shift all levels by one, so as to align with the usual Clifford-hierarchy convention.}. Each teleportation introduces an independent Pauli distortion known only to the party performing the local Bell state measurement; operationally, the protocol branches over these $d^2$ possibilities and ensures that, after $k-1$ rounds, the classical postprocessing recombines the branches to reproduce the target measurement statistics.
Formally, let $\mathcal{V}^{d,n}_k$ denote the set of $n$-party, $d$-dimensional projective measurements that admit such a $(k{-}1)$-round localization. This yields a nested hierarchy
\begin{equation}
\mathcal{V}^{d,n}_1 \subsetneq \cdots \subsetneq \mathcal{V}^{d,n}_k \subsetneq \cdots,
\end{equation}
where each level~$k$ is obtained from level~$k\,-\,1$ by allowing one additional blind back-and-forth teleportation cycle. Intuitively, the higher the level, the larger the class of Pauli-distortion patterns that can be re-expressed as valid correction operators from the previous level of the hierarchy and the more nonlocal the measurements one can implement with only local operations and joint classical postprocessing. Localization cost thus provides an operational measure of complexity, distinguishing between measurements that differ in their requirements for entanglement-assisted local realization. In terms of resources, an implementation at level~$k$ consumes a fixed amount of pre-shared entanglement that grows exponentially in~$k$ and super-exponentially in the number of parties.
In the resulting classification, the lowest level (no teleportation, no shared entanglement) contains only product measurements. For two qubits, the second level contains a single entangled measurement, the Bell-state measurement. Partially entangled bases—bases where the defining projectors are partially entangled—appear at higher levels. Remarkably, the Elegant Joint Measurement (EJM) is among the earliest such examples, occupying the third level, alongside only a handful of others~\cite{Pauwels2025}.

This observation raises a natural question: which of the more general tetrahedral measurements constructed in this work admit similarly efficient local implementations? To tackle this problem, let us start by recalling the explicit algebraic characterization of the $d^n\times d^n$ measurement unitaries $M \in \mathcal{U}(d^n)$ that are localizable at each level~$k$, which we will use it to classify the families constructed here.

\begin{proposition}[Level-$k$ localizable measurements]
A measurement basis \( M \) is localizable at level \( k \) if and only if \( M \in \mathcal{V}_k^{d,n} \), where
\begin{equation}
  \mathcal{V}_k^{d,n} := \left\{ M \in \mathcal{U}(d^n) \;\middle|\; M (\mathds{1} \otimes \mathcal{P}_{n-1}^d) M^\dagger \in \bar{\mathcal{V}}_{k-1}^{d,n} \right\},
  \label{eq:VaidmanHierarchy}
\end{equation}
and \( \mathcal{P}_n^d = \{X_d,Z_d\}^{\otimes n} \) is the \( n \)-qudit Weyl-Heisenberg group generated by the generalized shift and phase operators \( X_d \) and \( Z_d \). 

The auxiliary sets \( \bar{\mathcal{V}}_k^{d,n} \) are defined recursively as:
\begin{equation}
  \bar{\mathcal{V}}_k^{d,n} := \left\{ M \in \mathcal{U}(d^n) \;\middle|\; M \mathcal{P}_n^d M^\dagger \in \bar{\mathcal{V}}_{k-1}^{d,n} \right\},
  \label{eq:VaidmanHierarchy2}
\end{equation}
with base cases \( \bar{\mathcal{V}}_1^{d,n} := \mathcal{V}_1^{d,n} := \{ P \cdot \Phi \} \), the set of all permutation matrices $P$ with arbitrary phases $\Phi$.
\end{proposition}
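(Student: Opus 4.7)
The plan is to prove the biconditional by induction on the level $k$, translating each round of the blind-teleportation protocol of Ref.~\cite{Pauwels2025} into an algebraic constraint on the $n$-qudit unitary $M$ that diagonalizes the measurement basis. The induction is organized so that $\mathcal{V}_k^{d,n}$ corresponds to the first round, where one subsystem is singled out as the host of the final joint measurement and therefore acquires no Pauli distortion, whereas $\bar{\mathcal{V}}_k^{d,n}$ captures every subsequent round, where any of the $n$ subsystems may have been teleported and thus accumulated an unknown Weyl--Heisenberg correction.

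For the base case $k=1$, no entanglement is available and no teleportation takes place: each party simply measures locally in a fixed basis and a broadcast postprocessor combines the outcomes. This implements exactly those projective measurements whose basis elements are products in some local bases and whose global label is a deterministic function of the local labels; absorbing each party's local change of basis into their measurement shows that $M$ reduces to a monomial matrix $P\Phi$, matching $\mathcal{V}_1^{d,n}=\bar{\mathcal{V}}_1^{d,n}=\{P\Phi\}$.

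For the inductive step, I would analyze one round of blind teleportation in the Heisenberg picture, as is standard when deriving the Clifford hierarchy. Each blind teleportation of a subsystem consumes a pre-shared maximally entangled pair and leaves behind an uncorrected Weyl--Heisenberg operator $P$ on that subsystem, known only to the teleporter as a classical side-variable. Conjugating the target basis change $M$ by this Pauli produces an effective conditional measurement $M P M^\dagger$ that must be implementable by a level-$(k-1)$ protocol, i.e., must lie in $\bar{\mathcal{V}}_{k-1}^{d,n}$. Quantifying over all Pauli branches $P\in\mathds{1}\otimes\mathcal{P}_{n-1}^d$ that arise in the first round, where only the $n-1$ non-receiver subsystems are teleported, yields~\eqref{eq:VaidmanHierarchy}; in later rounds, the receiver may send back and forth any of its subsystems, so all $n$ tensor factors are candidates for teleportation and the Pauli set inflates to the full $\mathcal{P}_n^d$, giving~\eqref{eq:VaidmanHierarchy2}.

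The main obstacle is to show that broadcast classical postprocessing is truly powerful enough to invert the ambiguity introduced by the blindly teleported Paulis at every round. Concretely, one needs to demonstrate that the combined classical information---the Bell-measurement outcomes at each teleportation step plus the outcome of the final joint measurement---always determines the label $j$ unambiguously and independently of which Pauli branch was realized. This reduces, by the induction, to verifying that the base set $\bar{\mathcal{V}}_1^{d,n}=\{P\Phi\}$ yields deterministic outcomes up to a permutation with phases, so that the accumulated branch labels can be corrected by a single deterministic classical function at the end. Once this consistency is established, the equivalence between the operational and the algebraic definitions of level-$k$ localizability follows by induction.
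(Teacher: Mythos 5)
First, an important contextual point: the paper does not actually prove this proposition — it is explicitly \emph{recalled} from Ref.~\cite{Pauwels2025}, where the hierarchy $\mathcal{V}_k^{d,n}$ was introduced, so there is no in-paper argument to compare your attempt against. Your proposal reconstructs what such a proof would have to look like: an induction on the number of blind-teleportation rounds, with the Heisenberg-picture conjugation $M P M^\dagger$ encoding the unknown Pauli distortion left by each round, the restriction to $\mathds{1}\otimes\mathcal{P}_{n-1}^d$ in the first round reflecting that the final receiver's own share is not teleported then, and the full $\mathcal{P}_n^d$ in later rounds reflecting that the entire $n$-qudit state is shuttled back and forth. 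This matches the operational description the paper gives of the Vaidman-type protocol, so the architecture is sensible.

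However, as written the proposal is a strategy rather than a proof, and you flag the decisive missing piece yourself: you never establish that deterministic broadcast postprocessing disambiguates the Pauli branches \emph{exactly when} the recursive condition holds, and both directions of the biconditional hinge on this. For sufficiency, one must show that if $M(\mathds{1}\otimes P)M^\dagger\in\bar{\mathcal{V}}_{k-1}^{d,n}$ for every $P$, the level-$(k-1)$ subprotocols associated with the different branches can be run on the same physical state without knowing which branch occurred (only the teleporting party holds that classical record), which requires showing that the branch-dependent effective measurements differ only by a relabeling — a permutation with phases — that the final classical function can undo; this is precisely where the base case $\bar{\mathcal{V}}_1^{d,n}=\{P\Phi\}$ must be invoked, and you only gesture at it. For necessity, one must show that \emph{no} protocol with $k-1$ rounds can implement a measurement outside $\mathcal{V}_k^{d,n}$, which requires arguing that the protocol class is rigid (fixed entanglement resource and teleportation pattern, with freedom only in the final measurement and the postprocessing) and then running the branch analysis in reverse; your sketch does not address this direction at all. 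Until these two steps are supplied the equivalence between the operational and algebraic definitions remains unproven; the honest route here is to cite Ref.~\cite{Pauwels2025} for the result, as the paper itself does.
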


Explicitly characterizing the sets \( \mathcal{V}_k^{d,n} \) is highly nontrivial. However, a sufficient (though not necessary) condition for localizability at level~\( k \)—satisfied, for instance, by the EJM—is that the measurement unitary \( M \in \mathcal{C}_k^{d,n} \), where \( \mathcal{C}_k^{d,n} \) denotes the \( k^\text{th} \) level of the Clifford hierarchy~\cite{Gottesman1999} (see also Appendix~\ref{app:clifford}), defined recursively via conjugation with respect to Weyl–Heisenberg operators:
\begin{equation}
\mathcal{C}_{k}^{d,n} := \left\{ U \in \mathcal{U}(d^n)\;\middle|\; U \mathcal{P}_n^d U^\dagger \subseteq \mathcal{C}_{k-1}^{d,n} \right\} \,,
\end{equation}
with the base case $\mathcal{C}_1^{d,n} := \mathcal{P}_n^d$.
Indeed, we always have the inclusion~\cite{Pauwels2025}
\begin{equation}
\mathcal{C}_k \subsetneq \mathcal{V}_k\,,
\end{equation}
where we dropped the superscripts $n$ and $d$ to simplify notation.
This inclusion is useful because the Clifford hierarchy is relatively well understood, allowing us to exploit its algebraic structure. 

In the case of Weyl-Heisenberg-encoded tetrahedral measurements, this connection enables a concrete analysis. The theorem below isolates the key structural feature that determines localizability for all such bases: the Clifford level of a single diagonal phase gate.

To identify the relevant phase gate, let us first note that from the general form of tetrahedral fiducials in Eq.~\eqref{eq:genFid}, and using the definition of the eigenstates $\ket{\Phi_{\vec{z}, x}^{(d)}}$ from Eq.~\eqref{eq:Fourier-qudit-final}, we can define the following normal form for the unitary that prepares any fiducial state:
\begin{observation}[Fiducial normal form]
Every tetrahedral fiducial state can be prepared  as $\ket{\psi} = V\ket{0}^{\otimes n}$ using a unitary of the form
\begin{equation} \label{eq:GenFidPhase}
V = S_d(n)\, {H_d^{(n)}}^\dagger\, D_{\vec \alpha}\, H_d^{\otimes n}\,,
\end{equation}
where $D_{\vec \alpha}$ is a diagonal phase gate (which contains the phases $e^{i \alpha_{\vec{z},x}}$ from Eq.~\eqref{eq:genFid}), $H_d =\sum_{j,k} \omega_d^{j k}\ketbra{j}{k}$ is the generalized $d$-dimensional Hadamard gate (or discrete Fourier transform), and $S_d(n)$ is the operator defined in Eq.~\eqref{eq:SUMchain}. 
\end{observation}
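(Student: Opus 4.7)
The plan is to verify the stated normal form by a direct, right-to-left computation of $V\ket{0}^{\otimes n}$ and to match it term by term with the general fiducial of Eq.~\eqref{eq:genFid}. Both the target state and the proposed circuit $V$ are already organized in the same layered way---Hadamards on every qudit, a diagonal of phases, a final Hadamard acting on the last qudit only, and a chain of SUM gates---so the content of the observation is essentially that these two descriptions coincide term by term.

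Concretely, I would first observe that $H_d^{\otimes n}\ket{0}^{\otimes n}=\tfrac{1}{\sqrt{d^n}}\sum_{\vec z,x}\ket{\vec z,x}$ is the uniform superposition over all computational basis vectors. Acting next with $D_{\vec\alpha}$, diagonal in that basis with entries $e^{i\alpha_{\vec z,x}}$, attaches precisely the phases that appear in Eq.~\eqref{eq:genFid}. Then ${H_d^{(n)}}^\dagger$, acting only on the $n$-th qudit, maps $\ket{x}\mapsto\ket{x}_{X_d}$, so that the state becomes $\tfrac{1}{\sqrt{d^n}}\sum_{\vec z,x}e^{i\alpha_{\vec z,x}}\bigotimes_{i=1}^{n-1}\ket{z_i}_{Z_d}\otimes\ket{x}_{X_d}$. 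Applying finally $S_d(n)$ and invoking Eq.~\eqref{eq:Fourier-qudit-final} identifies this with $\tfrac{1}{\sqrt{d^n}}\sum_{\vec z,x}e^{i\alpha_{\vec z,x}}\ket{\Phi^{(d)}_{\vec z,x}}=\ket{\psi_d^{(n)}}$. Since the real phases $\alpha_{\vec z,x}$ are free parameters of $D_{\vec\alpha}$, the same computation run backwards shows that every tetrahedral fiducial of Eq.~\eqref{eq:genFid} is reached as $V\ket{0}^{\otimes n}$ for some choice of $D_{\vec\alpha}$.

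There is no substantive obstacle: the argument is a one-shot unfolding of the definitions of $H_d$, $D_{\vec\alpha}$ and $S_d(n)$, combined with the eigenstate formula established in Appendix~\ref{app:eigenbases}. The only point requiring care, which I would address at the outset, is the convention: whether $H_d$ or $H_d^\dagger$ acts on the $n$-th qudit depends on the labeling chosen for the $X_d$-eigenstates, and once this convention is aligned with the one used in Eq.~\eqref{eq:Fourier-qudit-final} the identification is immediate. The observation is then established constructively, with $V$ providing an explicit preparation circuit built from a single layer of local Hadamards, a diagonal phase gate $D_{\vec\alpha}$ whose entries carry the free parameters, and the fixed SUM chain $S_d(n)$---a layered form that is also the key structural input needed for the subsequent analysis of localizability via the Clifford hierarchy.
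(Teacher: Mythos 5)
Your proposal is correct and is exactly the argument the paper intends: the Observation is presented as an immediate unfolding of Eq.~\eqref{eq:genFid} together with the eigenstate formula Eq.~\eqref{eq:Fourier-qudit-final}, and your right-to-left evaluation of $V\ket{0}^{\otimes n}$ (including the remark that ${H_d^{(n)}}^\dagger$ is the correct factor given the convention $\ket{x}_{X_d}=H_d^\dagger\ket{x}_{Z_d}$ from Appendix~\ref{app:eigenbases}) is precisely that unfolding. No gaps; the only cosmetic point is that $H_d$ as written in the statement omits the $1/\sqrt{d}$ normalization, which your computation implicitly (and correctly) supplies.
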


We then have the following theorem:
\begin{theorem}[Clifford level of a tetrahedral measurement unitary]\label{lem:localizability}
Fix a fiducial state $\ket{\psi}$ and define the basis-change unitary
\begin{equation}
M_\psi \;:=\; \sum_{g:U_g\in G^{(n)}_d} U_g\,\ket{\psi}\bra{g},
\end{equation}
with $\{\ket{g}\}$ denoting the computational basis, so that $M_\psi\ket g=U_g\ket{\psi}=\ket{\psi_g}$ (and with $U_0=\mathds{1}$).
Considering the diagonal phase gate $D_{\vec \alpha}$ that appears in the normal form of Eq.~\eqref{eq:GenFidPhase}, one has, for $k\ge2$, that
\begin{equation}\label{eq:fiducial_determines_Clifford}
D_{\vec \alpha}\in\mathcal{C}_k \quad\Longrightarrow\  M_\psi \in \mathcal{C}_k.
\end{equation}
\end{theorem}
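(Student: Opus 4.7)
The plan is to show that, under the natural labeling of group elements by $\mathbb{Z}_d^n$, the basis‑change unitary $M_\psi$ coincides with the fiducial‑preparation unitary $V$, so that the theorem reduces to the intermediate claim $V\in\mathcal{C}_k$; this in turn follows from a Clifford sandwich argument applied to the normal form \eqref{eq:GenFidPhase}.

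For the sandwich, write $V = C_L\,D_{\vec\alpha}\,C_R$ with $C_L := S_d(n)(H_d^{(n)})^\dagger$ and $C_R := H_d^{\otimes n}$, both built from SUM and Hadamard gates and hence in $\mathcal{C}_2$. For any Pauli $P\in\mathcal{P}_n^d$ and $P' := C_R P C_R^\dagger \in \mathcal{P}_n^d$,
\begin{equation*}
V P V^\dagger \;=\; C_L\,\bigl(D_{\vec\alpha}\,P'\,D_{\vec\alpha}^\dagger\bigr)\,C_L^\dagger;
\end{equation*}
the inner factor lies in $\mathcal{C}_{k-1}$ by the hypothesis $D_{\vec\alpha}\in\mathcal{C}_k$, and conjugation by the Clifford $C_L$ preserves $\mathcal{C}_{k-1}$, giving $V\in\mathcal{C}_k$.

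For the identification $M_\psi = V$, parametrize $g = (a_1,\dots,a_{n-1},b)\in\mathbb{Z}_d^n$ so that $U_g = \prod_i \bigl(Z_d^{(i)}{Z_d^{(i+1)}}^{*}\bigr)^{a_i}\,(X_d^{\otimes n})^b$ and $\ket g$ is the corresponding computational basis vector. The Clifford $E := C_L$ was engineered to simultaneously diagonalize $G_d^{(n)}$ in the computational basis (this is precisely the content of Eq.~\eqref{eq:Fourier-qudit-final}), with characters yielding $E^\dagger U_g E = Z^g$, the diagonal Pauli of exponent vector $g$. Since $Z^g$ and $D_{\vec\alpha}$ are both diagonal and therefore commute,
\begin{equation*}
V^\dagger U_g V \;=\; C_R^\dagger\,D_{\vec\alpha}^\dagger\,Z^g\,D_{\vec\alpha}\,C_R \;=\; H_d^{\dagger\otimes n}\,Z^g\,H_d^{\otimes n} \;=\; X^g,
\end{equation*}
using the intertwining $H_d^\dagger Z_d H_d = X_d$. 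Hence $U_g V = V\,X^g$, so that $M_\psi\ket g = U_g V\ket{0^n} = V\,X^g\ket{0^n} = V\ket g$, and therefore $M_\psi = V\in\mathcal{C}_k$.

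The main obstacle is the identity $V^\dagger U_g V = X^g$: a priori, conjugating a Pauli by the non-Clifford $V$ has no reason to yield a Pauli, let alone the specific shift $X^g$ matching the label $g$. Everything rests on two features deliberately engineered into the construction---that $G_d^{(n)}$ is abelian and simultaneously diagonalized by the Clifford part $C_L$ of the normal form, and that $D_{\vec\alpha}$ is diagonal in the same basis so that it commutes past $Z^g$. For a generic abelian subgroup or a generic fiducial state neither property would hold, and $M_\psi$ could sit several Clifford levels above $D_{\vec\alpha}$; for the tetrahedral construction the Clifford level of $M_\psi$ is pinned exactly by that of $D_{\vec\alpha}$.
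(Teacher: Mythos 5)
Your proposal is correct and follows essentially the same route as the paper: both arguments hinge on identifying $M_\psi$ with the normal-form preparation unitary $V = S_d(n){H_d^{(n)}}^\dagger D_{\vec\alpha}H_d^{\otimes n}$ by conjugating the $U_g$ into diagonal Paulis, commuting them past the diagonal $D_{\vec\alpha}$, and converting $Z^g$ into $X^g$ through the Hadamards, after which membership in $\mathcal{C}_k$ follows from closure under Clifford multiplication (which your explicit sandwich computation re-derives in place).
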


\begin{proof}
Suppose $\ket{\psi}=S_d(n) {H_d^{(n)}}^{\dagger} D_{\vec\alpha} H_d^{\otimes n}\ket{0}^{\otimes n}$ with $D_{\vec\alpha}\in\mathcal{C}_k$, for some $k\ge 2$.
Define the fixed Clifford operator
\begin{equation}
C_\text{diag} := {H_d^{(n)}} S_d(n)^\dagger \ \ \in\mathcal{C}_2
\end{equation}
which, as shown in Appendix~\ref{app:eigenbases} (see Eq.~\eqref{eq:change_group_rep}), maps the generators of $G^{(n)}_d$ to the single-site (diagonal) operators $Z_d$. 
Conjugating $M_\psi$ on the left, let us then define:
\begin{equation}
\tilde{M}_\psi := C_\text{diag}\, M_\psi = \sum_g \tilde{U}_g (C_\text{diag} \ket{\psi}) \bra{g}\,
\end{equation}
with $\tilde{U}_g{:=}C_\text{diag} U_{\!g} C_\text{diag}^\dagger$.
Using $C_\text{diag} \ket{\psi}= D_{\vec \alpha} H_d^{\otimes n}\!\ket{0}^{\!\otimes n}$ and the fact that $D_{\vec \alpha}$ and $\tilde{U}_g$ are all diagonal and therefore commute, we may write 
\begin{equation}
\tilde{M}_\psi = D_{\vec \alpha} \qty[\sum_g \tilde{U}_g H_d^{\otimes n}\ket{0}^{\otimes n}\bra{g}] \,. 
\end{equation}
Finally, remembering that each $\tilde{U}_g$ is a product of $Z_d$ operators and using the identity $Z_d H_d = H_d X_d$, we can write
\begin{equation}
\sum_g \tilde U_g\,H_d^{\otimes n}\ket{0}^{\otimes n}\bra g
\;=\; \sum_g H_d^{\otimes n}\ket{g}\bra g
\;=\; H_d^{\otimes n},
\end{equation}
which is in $\mathcal{C}_2 \subseteq \mathcal{C}_k$.  Therefore $\tilde{M}_\psi=D_{\vec\alpha}\,H_d^{\otimes n}\in\mathcal{C}_k$, and finally
$M_\psi=C_{\mathrm{diag}}^\dagger \tilde{M} _\psi \in\mathcal{C}_k$.
\end{proof}

Remembering that $\mathcal{C}_k \subsetneq \mathcal{V}_k$, the theorem above implies that the localizability of any tetrahedral basis can be analyzed by examining the diagonal phase gate used to construct the fiducial. This is particularly advantageous, as the structure of such gates is fully characterized by the phase-polynomial formalism of Ref.~\cite{Gottesman2017} (see Appendix~\ref{app:clifford}). In what follows, we specialize to the qubit case (i.e., \( d = 2 \)). The higher-dimensional case remains open and may be addressed in future work.

Cui-Gottesman-Krishna's phase-polynomial framework~\cite{Gottesman2017} considers diagonal unitary operators on \(n\) qubits of the form
\begin{equation} 
D_{f_m} = \sum_{\vec{z} \in \mathbb{Z}_2^n}
    \exp\!\left[i\,\tfrac{2\pi}{2^m}\,f_m(\vec{z})\right]
    \,\bigl|\vec{z}\bigr\rangle\!\bigl\langle\vec{z}\bigr|, \label{eq:phasepoly}
\end{equation}
where the phase polynomial \( f_m : \mathbb{Z}_2^n \to \mathbb{Z}_{2^m} \) is defined as
\begin{equation}
f_m(\vec{z}) =
\sum_{S \subseteq [n] \,\;S \neq \emptyset} a_S \prod_{i \in S} z_i \mod 2^m,
\end{equation}
with coefficients \( a_S \in \mathbb{Z}_{2^m} \), and where \( m \) is referred to as the \emph{precision} of the polynomial. Any diagonal unitary operator can indeed be approximated by an operator of this form, up to a precision characterized by the value of $m$. 

For such operators, it was shown~\cite{Gottesman2017} that the lowest level $k$ for which $D_{f_m} \in \mathcal{C}_k$ is given by
\begin{equation}\label{eq:refined-level}
k \;=\; \max_{S \neq \emptyset : \,a_S \neq 0} \big[ (m - \nu_2(a_S) - 1) + |S| \big],
\end{equation}
where the maximization runs over all nonempty monomial sets $S$, where \( |S| \) denotes the cardinality of $S$ (i.e., the number of variables appearing in the corresponding monomial), and \( \nu_2(a_S) \) is the 2-adic valuation (i.e., the exponent of the highest power of 2 dividing \( a_S \in \mathbb{Z}_{2^m} \)).
In particular, whenever \( a_S \) is divisible by a power of 2, the effective precision of that term is reduced by \( \nu_2(a_S) \), lowering its contribution to the hierarchy level.
We refer to Appendix~\ref{app:clifford} for details. 

This characterization provides a concrete and efficient way to classify diagonal gates and thus to identify localizable tetrahedral measurements according to their symmetry and Clifford hierarchy level.

This criterion allows us to systematically enumerate all diagonal gates in \( \mathcal{C}_k \) by scanning over all coefficient tuples $\{a_S\}_S$ with each $a_S \in \mathbb{Z}_{2^m}$ such that the maximal contribution in Eq.~\eqref{eq:refined-level} equals the target level \( k \). Each such coefficient tuple defines a phase polynomial \( f_m \), and hence a diagonal unitary \( D_{f_m} \in \mathcal{C}_k \).

For each polynomial, we construct the associated tetrahedral basis via the fiducial state given by Eq.~\eqref{eq:GenFidPhase}, with the diagonal gate defined according to Eq.~\eqref{eq:phasepoly}. The resulting bases are classified geometrically.

\subsection{Two-qubit hierarchy}

We now systematically classify all tetrahedral two-qubit bases appearing in the second, third, and fourth levels of the Clifford hierarchy using Cui-Gottesman-Krishna's characterization.

For two qubits, the Boolean phase polynomial
\begin{equation}
  f_m \colon \mathbb{Z}_2^2 \to \mathbb{Z}_{2^m}
\end{equation} 
takes the general form
\begin{equation}
  f_m(z_1,z_2) = a_{1} \,z_1 + a_{2} \,z_2 + a_{1,2}\, z_1 \,z_2 \mod 2^m,
\end{equation} 
where \( a_S \in \mathbb{Z}_{2^m} \) are coefficients indexed by subsets \( S \subseteq \{1,2\} \), and \( m \) is the phase precision.

To determine the Clifford level of the associated diagonal unitary, we apply Cui-Gottesman-Krishna's criterion~\eqref{eq:refined-level}.
This classification is implemented by enumerating over the phase precision $m$. For each fixed $m$, we list all coefficient triples $a_S \in \mathbb{Z}_{2^m}^3$. Note that we can always assume one coefficient is odd; i.e., \(\exists\,S\) with \(\nu_2(a_S)=0\), otherwise this would lower the effective precision $m$. For every pair $(m,a_S)$, we compute the resulting level $k$ via Eq.~\eqref{eq:refined-level} and bin by $k$. It then suffices to consider \(m\in\{k-1,k\}\). Indeed, for two qubits one always has \(k\le m-1+n=m+1\). If \(k<m\), Eq.~\eqref{eq:refined-level} would require \(\nu_2(a_S)>|S|-1\ge 0\) for all \(S\), contradicting the existence of an odd coefficient. Hence \(k\ge m\), so only \(k\in\{m,m+1\}\) can occur. Equivalently, when reporting by level \(k\), the contributing instances necessarily come from \(m\in\{k-1,k\}\).
We then build the corresponding tetrahedral bases and classify according to their geometry. We now perform this analysis explicitly for levels \( k = 2, 3, 4 \). Associated computational tools are provided in Ref.~\cite{CompAppendix}.

At level \( k=1 \), only trivial, separable geometries appear. At level \( k=2 \) we get a pointlike configuration, corresponding to the maximally entangled (Bell) basis, corresponding to the polynomial \[f_{m=2}(z_1,z_2) = z_1+z_2 \mod 2^2\,.\]

At level \( k = 3 \), we identify exactly three distinct tetrahedral geometries, represented by the following polynomials:
\begin{enumerate}
  \item \textbf{Elegant Joint Measurement (EJM):}
    \[
      f_{m=2}(z_1,z_2) = z_1 z_2 \mod 2^2,
    \]
    yielding a regular tetrahedron with pairwise equidistant Bloch vectors each of length $\frac{\sqrt{3}}{2}$.

  \item \textbf{Rectangular configuration (\(1 \times \sqrt{2}\)):}
    \[
      f_{m=3}(z_1,z_2) = z_1 + z_2 \mod 2^3,
    \]
    corresponds to a rectangle in the Bloch sphere with Bloch vectors of size $\frac{\sqrt{3}}{2}$ (like the EJM), e.g. the first Bloch vector of the first qubit is $(\frac{1}{\sqrt{2}},\frac12,0)^T$. It is equivalent to the \( E_2 \) basis introduced in~\cite{Pauwels2025}.

  \item \textbf{Linear configuration (\( \sqrt{2} \)-line):}
    \[
      f_{m=3}(z_1,z_2) = z_1 + 2 z_2 \mod 2^3,
    \]
    produces a linear configuration with the first qubit Bloch vector corresponding to $(\frac{1}{\sqrt{2}},0,0)^T$, of length $\frac{1}{\sqrt{2}}$. This recovers the twisted Bell basis discussed in Refs.~\cite{Boreiri2023,Pauwels2025}.
\end{enumerate}
The set \( \mathcal{V}_3 \) contains five inequivalent two-qubit bases~\cite{Pauwels2025}, three of which we see appear here. This reflects an underlying group-theoretic structure: all three are generated by the same symmetry orbit. The two remaining bases from Ref.~\cite{Pauwels2025} do not appear here. The first—the partial Bell basis—is excluded as it is not isoentangled. The second, a distinct member of the Bell family recently discussed in~\cite{DelSanto2024}, lies within the broader class \( \mathcal{V}_3 \) but not \( \mathcal{C}_3 \).

\begin{table*}[ht!]
\centering
\begin{tabular}{|c|c|c|c|c|} 
 \hline
 $m$ & \begin{tabular}{c} Phase \\ polynomial \end{tabular} & Fiducial state & \begin{tabular}{c} Marginal states \\ (Bloch vectors) \end{tabular} & Geometry \\[1ex]
 \hline
 3 & $z_1+z_2+z_1\,z_2$ & $[e^{i\frac{\pi}{8}}\frac{c}{\sqrt{2}},\frac12,\frac{i}{2},e^{-i\frac{3\pi}{8}}\frac{s}{\sqrt{2}}]^T$ & \begin{tabular}{c} $[\frac{1}{2\sqrt{2}},\frac{1}{2\sqrt{2}},\frac{1}{2\sqrt{2}}]^T$ \\ $[\frac{1}{2\sqrt{2}},-\frac{1}{2\sqrt{2}},\frac{1}{2\sqrt{2}}]^T$ \end{tabular} & \begin{tabular}{c} Regular \\ tetrahedra \end{tabular} \\[1ex] 
 \hline
 3 & $z_1\,z_2$ & $[\frac{1}{\sqrt{2}},e^{-i\frac{3\pi}{8}}\frac{s}{\sqrt{2}},e^{i\frac{\pi}{8}}\frac{c}{\sqrt{2}},0]^T$ & \begin{tabular}{c} $[c^2,\frac{1}{2\sqrt{2}},s^2]^T$ \\ $[s^2,-\frac{1}{2\sqrt{2}},c^2]^T$ \end{tabular} & \begin{tabular}{c} Irregular \\ tetrahedra \end{tabular} \\[1ex]
 \hline
 4 & $z_1+z_2+2\,z_1\,z_2$ & $[\frac{1+c+is}{2\sqrt{2}},\frac{c+is-i}{2\sqrt{2}},\frac{c+is+i}{2\sqrt{2}},\frac{1-c-is}{2\sqrt{2}}]^T$ & \begin{tabular}{c} $[\frac{c}{\sqrt{2}},\frac{1}{2},\frac{s}{\sqrt{2}}]^T$ \\ $[\frac{s}{\sqrt{2}},-\frac{1}{2},\frac{c}{\sqrt{2}}]^T$ \end{tabular} & \begin{tabular}{c} Irregular \\ tetrahedra \end{tabular} \\[1ex] 
 \hline
 4 & $z_1+z_2+4\,z_1\,z_2$ & $[\frac{1+c+is}{2\sqrt{2}},e^{i\frac{\pi}{8}}\frac{1+s-ic}{2\sqrt{2}},e^{i\frac{\pi}{8}}\frac{1-s+ic}{2\sqrt{2}},\frac{1-c-is}{2\sqrt{2}}]^T$ & \begin{tabular}{c} $[\frac{s}{\sqrt{2}},\frac{1}{2\sqrt{2}},\frac{c}{\sqrt{2}}]^T$ \\ $[\frac{c}{\sqrt{2}},-\frac{1}{2\sqrt{2}},\frac{s}{\sqrt{2}}]^T$ \end{tabular} & \begin{tabular}{c} Irregular \\ tetrahedra \end{tabular} \\[1ex] 
 \hline
 4 & $z_1+z_2$ & $[\frac{1+c+is}{2\sqrt{2}},e^{i\frac{\pi}{8}}\frac{1-c-is}{2\sqrt{2}},e^{i\frac{\pi}{8}}\frac{1+c+is}{2\sqrt{2}},\frac{1-c-is}{2\sqrt{2}}]^T$ & \begin{tabular}{c} $[c,\frac{1}{2\sqrt{2}},0]^T$ \\ $[0,-\frac{1}{2\sqrt{2}},c]^T$ \end{tabular} & Rectangles \\[1ex] 
 \hline
 4 & $3\,z_1+3\,z_2$ & $[\frac{1+s+ic}{2\sqrt{2}},e^{i\frac{3\pi}{8}}\frac{1-s-ic}{2\sqrt{2}},e^{i\frac{3\pi}{8}}\frac{1+s+ic}{2\sqrt{2}},\frac{1-s-ic}{2\sqrt{2}}]^T$ & \begin{tabular}{c} $[s,\frac{1}{2\sqrt{2}},0]^T$ \\ $[0,-\frac{1}{2\sqrt{2}},s]^T$ \end{tabular} & Rectangles \\[1ex] 
 \hline
 4 & $z_1+4\,z_2$ & $[e^{i\frac{\pi}{4}}\frac12,e^{-i\frac{\pi}{8}}\frac12,e^{i\frac{3\pi}{8}}\frac12,e^{-i\frac{\pi}{4}}\frac12]^T$ & \begin{tabular}{c} $[c,0,0]^T$ \\ $[0,-c,0]^T$ \end{tabular} & Line segments \\[1ex] 
 \hline
 4 & $3\,z_1+4\,z_2$ & $[e^{i\frac{\pi}{4}}\frac12,e^{i\frac{\pi}{8}}\frac12,e^{i\frac{5\pi}{8}}\frac12,e^{-i\frac{\pi}{4}}\frac12]^T$ & \begin{tabular}{c} $[s,0,0]^T$ \\ $[0,-s,0]^T$ \end{tabular} & Line segments \\[1ex] 
 \hline
 4 & $z_1+2\,z_1\,z_2$ & $[\frac{1}{\sqrt{2}},e^{-i\frac{\pi}{4}}\frac{s}{\sqrt{2}},e^{i\frac{\pi}{4}}\frac{c}{\sqrt{2}},0]^T$ & \begin{tabular}{c} $[\frac{c}{\sqrt{2}},\frac{c}{\sqrt{2}},s^2]^T$ \\ $[\frac{s}{\sqrt{2}},-\frac{s}{\sqrt{2}},c^2]^T$ \end{tabular} & \begin{tabular}{c} Different \\ tetrahedra \end{tabular} \\[1ex] 
 \hline
 4 & $z_1+2\,z_2$ & $[e^{i\frac{\pi}{8}}\frac{c}{\sqrt{2}},e^{-i\frac{\pi}{4}}\frac{s}{\sqrt{2}},e^{i\frac{\pi}{4}}\frac{c}{\sqrt{2}},e^{-i\frac{3\pi}{8}}\frac{s}{\sqrt{2}}]^T$ & \begin{tabular}{c} $[c,\frac{s}{\sqrt{2}},0]^T$ \\ $[0,-\frac{c}{\sqrt{2}},\frac{1}{\sqrt{2}}]^T$ \end{tabular} & \begin{tabular}{c} Different \\ rectangles \end{tabular} \\[1ex] 
 \hline
 4 & $3\,z_1+2\,z_2$ & $[e^{i\frac{\pi}{8}}\frac{c}{\sqrt{2}},\frac{s}{\sqrt{2}},\frac{ic}{\sqrt{2}},e^{-i\frac{3\pi}{8}}\frac{s}{\sqrt{2}}]^T$ & \begin{tabular}{c} $[s,\frac{c}{\sqrt{2}},0]^T$ \\ $[0,-\frac{s}{\sqrt{2}},\frac{1}{\sqrt{2}}]^T$ \end{tabular} & \begin{tabular}{c} Different \\ rectangles \end{tabular} \\[1ex] 
 \hline
 4 & $z_1+3\,z_2$ & $[\frac{1+s+ic}{2\sqrt{2}},\frac{c+is-i}{2\sqrt{2}},\frac{c+is+i}{2\sqrt{2}},\frac{1-s-ic}{2\sqrt{2}}]^T$ & \begin{tabular}{c} $[c,s^2,0]^T$ \\ $[0,-c^2,s]^T$ \end{tabular} & \begin{tabular}{c} Different \\ rectangles \end{tabular} \\[1ex] 
 \hline
\end{tabular}
\caption{Localizable tetrahedral $n=2$ qubit bases at level $k=4$. For each geometric class, we report one representative phase polynomial, the corresponding fiducial state and the Bloch vectors of the qubit reductions. In this table we defined $c:= \cos\frac{\pi}{8}$, $s:= \sin\frac{\pi}{8}$.} 
\label{table:1}
\end{table*}

At level \( k = 4 \), we encounter a landscape that was completely out of reach for general bases~\cite{Pauwels2025}. Each of the three canonical geometries from \( k = 2 \) reappears in a refined form, and new, irregular tetrahedra emerge. Interestingly, while for $k \leq 3$, the geometries on both qubits are always the same, for $k=4$, we also find examples in which the geometries on the two qubits are different (different squares or different tetrahedra). We summarize all solutions in Table~\ref{table:1}. 

Interestingly, at level $k=4$, we encounter (up to local unitaries) a representative of the family of bases generated by the fiducial state $\ket{\psi_{\mathrm{EJM}}^\theta}$ in Eq.~\eqref{eq:cyril-family}, which interpolates between the Bell state measurement ($\theta = \pi/2$) and the EJM ($\theta = 0$). Namely, we obtain fiducial states that are local-unitarily equivalent to $\ket{\psi_{\mathrm{EJM}}^{\theta=\pi/4}}$: see the first line in Table~\ref{table:1}. This suggests that more members of this family appear at higher levels of the hierarchy.

Indeed, the fiducial state $\ket{\psi_{\mathrm{EJM}}^\theta}$ can be written in the normal form of Eq.~\eqref{eq:GenFidPhase} with the diagonal phase gate
\begin{equation}
    D_{\vec\alpha} =\mathrm{diag}(1, e^{i\pi/2}, e^{-i\pi/2}, e^{i(\theta+\pi/2)})\,,
\end{equation}
(see Eq.~\eqref{eq:cyril-family-phases}). 
Considering a value of $\theta \in [0,\pi/2]$ such that $\theta/\pi$ is rational, we may write $\theta = 2\pi\ell/2^m$, for some precision value $m\ge2$ and some integer $\ell$. One then has $\theta + \pi/2 = 2\pi(\ell+2^{m-2})/2^m$, and the diagonal unitary $D_{\vec\alpha}$ above gets generated by the phase polynomial
\begin{align}
\hspace{-3mm} f_{m}(z_1, z_2) = -2^{m-2} z_1 + 2^{m-2} z_2 + (\ell+2^{m-2}) z_1 z_2 \ \notag \\
\mod 2^m.
\end{align}
Noting that the quadratic term $z_1 z_2$ dominates in Eq.~\eqref{eq:refined-level}, one finds that the localization level of this family is given by
\begin{align}
k & = (m - \nu_2(\ell+2^{m-2}) - 1) + 2 \notag \\
& = 1 - \nu_2\big(\tfrac{\theta}{2\pi} + \tfrac14\big), \label{eq:k_value_EJMtheta}
\end{align}
where we extended the 2-adic valuation to rational numbers via $\nu_2(r/s) = \nu_2(r) - \nu_2(s)$ and used the fact that $\nu_2\big(\frac{2^m}{2\pi}(\theta+\pi/2)\big) = m + \nu_2\big(\tfrac{\theta}{2\pi} + \tfrac14\big)$; see Fig.~\ref{fig:EJMfamily}.
This recovers our earlier results for the Bell basis ($m=2,\ell=1 \Rightarrow k = 2$), the EJM ($\ell=0 \Rightarrow k = 3$), and defines an infinite family of regular tetrahedra of radius $\frac{\sqrt{3}}{2} \cos\theta = \frac{\sqrt{3}}{2} \cos\!\left(\frac{2\pi \ell}{2^m} \right)$, each localizable at the level $k$ given in Eq.~\eqref{eq:k_value_EJMtheta}. This example nicely illustrates the observation made in \cite{Pauwels2025}, that at higher levels of the hierarchy, the different allowed degrees of entanglement becomes denser.

We conjecture that there similarly exist other phase-polynomial families generalizing both the twisted Bell bases (parameterized by the twisting angle) and the $E_2$ family.

\begin{figure}[ht!]
\centering
\includegraphics[width=0.96\columnwidth]{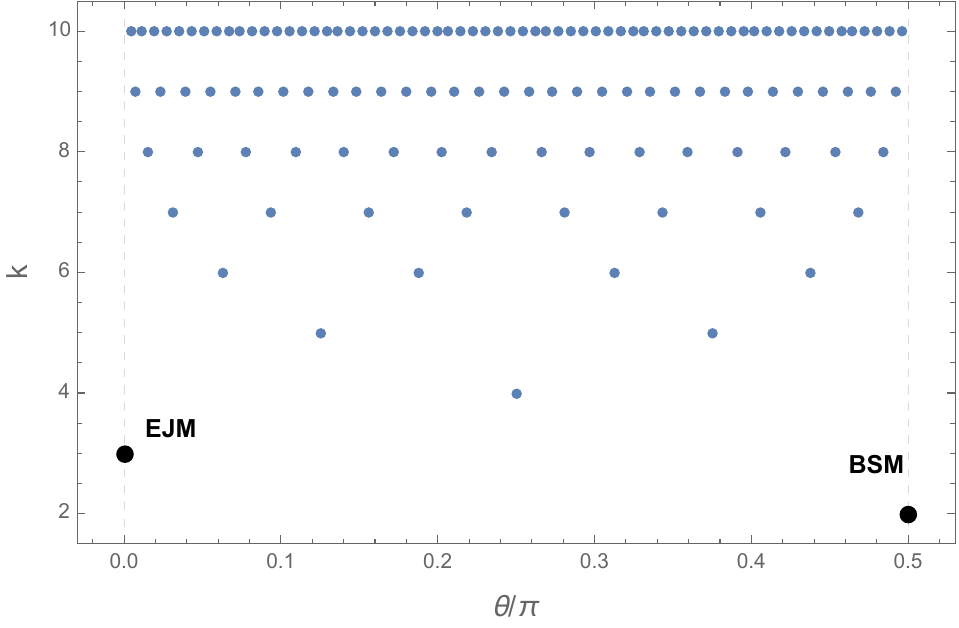}
\caption{Localization level $k$ of the family of orbit bases of $\ket{\psi^\theta_{\mathrm{EJM}}}$ (Eq.~\eqref{eq:cyril-family}) as a function of the interpolation angle $\theta$. The horizontal axis is shown in fractions of $\pi$. The Bell-state measurement (BSM, $\theta=\pi/2$) occurs at level $k=2$, while the EJM ($\theta=0$) occurs at level $k=3$. In between, rational values of $\theta/\pi$ yield discrete points with localization level $k = 1 - \nu_2\big(\tfrac{\theta}{2\pi} + \tfrac14\big)$, see Eq.~\eqref{eq:k_value_EJMtheta}.
Higher dyadic fractions thus interpolate between the two endpoints, producing an infinite family of localizable tetrahedral bases at increasing levels. \label{fig:EJMfamily}}
\end{figure}

\subsection{Three-qubit hierarchy}

We extend the systematic classification of tetrahedral bases to three-qubit systems. For three qubits, the general Boolean phase polynomial \( f_m : \mathbb{Z}_2^3 \to \mathbb{Z}_{2^m} \) has maximal algebraic degree 3, given explicitly by
\begin{align}
f_m(z_1,z_2,z_3) &=\; a_1 \,z_1 + a_2 \,z_2 + a_3 \,z_3 \nonumber \\
               &\quad + a_{12} \,z_1 \,z_2 + a_{13} \,z_1 \,z_3 + a_{23} \,z_2 \,z_3 \nonumber \\
               &\quad + a_{123} \,z_1 \,z_2 \,z_3 \mod 2^m.
\end{align}
The Clifford hierarchy level is again determined by Eq.~\eqref{eq:refined-level}.
We enumerate and classify all resulting tetrahedral bases at levels \(k = 1, 2, 3, 4\), organizing them according to their geometric structure. By the same arguments for the $n=2$ case, it is sufficient to consider precisions $m$ such that $k-2\le m\le k$. The full set of geometries is extensive and available in the computational appendix on GitHub~\cite{CompAppendix}. Here, we highlight several key features.

Like in the two-qubit case, at level $k=1$ ($m=1$) we again only find product bases. At level $k=2$ ($m=1,2$), we also obtain fiducial states with one qubit in a pure state, the other two in a maximally entangled state (e.g. of the form $\ket{\Phi^+}\ket{0}$) or with all three marginals completely mixed (e.g. of the form $\frac{1}{\sqrt{2}}(\ket{\Phi^+}\ket{0}+\ket{\Psi^+}\ket{1})$). At \(k = 3\), many bases resemble combinations or extensions of the known two-qubit geometries. For instance, the level-3 phase polynomial
\begin{equation} 
f_{m=1}(z_1, z_2, z_3) = z_1 z_2 z_3 \mod 2,
\end{equation} 
yields a configuration of squares, reminiscent of the \(E_2\) basis. Similarly, the polynomial
\begin{equation} 
f_{m=3}(z_1, z_2, z_3) = z_1 + z_2 + z_3 + 4z_1z_2z_3 \mod 2^3
\end{equation} 
produces a configuration with an irregular tetrahedron on the first qubit, a rectangle on the second and a square on the third.

At level \(k = 4\), we find several unique geometries, including a regular tetrahedral bases. This geometry has half the size of the two-qubit EJM. One example corresponds to the degree-3 polynomial at precision \(m = 2\),
\begin{equation} 
f_{m=2}(z_1, z_2, z_3) = z_1 z_3 + z_1 z_2 z_3 \mod 2^2 \,.
\end{equation} 
The associated fiducial state is explicitly given by
\begin{equation} 
\ket{\psi} = \frac{1}{2} \big[1,\ 0,\ 0,\ \tfrac{1 - i}{2},\ \tfrac{1 + i}{2},\ 1,\ 1,\ 0 \big]^T.
\end{equation} 

In Ref.~\cite{hm9n-mkb3}, we analyze in detail the solutions whose single-qubit marginals form regular tetrahedra on the Bloch sphere. For $n=3$, we find that although a geometrically unique regular configuration first appears at level $k=4$ of the hierarchy, it appears in multiple bases that are not related by local unitaries, reflecting the richness of multipartite entanglement. For $n=4$, the first such configurations—where all single-qubit Bloch vectors form regular tetrahedra—appear at level $k=5$.

\section{Discussion and Outlook}

Inspired by the Elegant Joint Measurement (EJM)~\cite{Gisin2019}—an example of a partially entangled measurement with symmetric local structure (tetrahedral single-qubit marginals) and interesting properties and applications—we developed a framework that abstracts and generalizes these features. Specifically, we constructed families of highly structured orthonormal bases as group orbits of fiducial states under tensor-product Weyl--Heisenberg actions, extending features of the EJM to multi-qubit and qudit systems. The resulting bases are locally encodable by construction and exhibit rich symmetry properties: geometric (e.g., tetrahedral or rectangular arrangements of Bloch vectors), algebraic (arising from group covariance), and entanglement-theoretic (isoentanglement). Moreover, we show that the problem of determining localization cost, which is generally hard~\cite{Pauwels2025}, admits a simple solution for this class of bases, enabling a general classification of the measurements we construct. Finally, our framework naturally recovers as special cases the families of two-qudit measurements studied in Refs.~\cite{Tavakoli2021,Czartowski2021}.

In contrast to the well-developed theory of entangled states, the structure of entangled measurements remains comparatively poorly understood, especially beyond the extremes of local and fully global maximally entangled measurements~\cite{Cavalcanti2023}. Explicit, symmetric “intermediate'' examples are scarce, despite their practical and conceptual importance~\cite{Gisin2019,Tavakoli2022}. In this context, structured families of joint measurements---exhibiting symmetries and admitting a transparent description---are valuable both for theory (they enable tractable analysis) and for practice (they suggest implementable targets).
From this perspective, one immediate application of the present work is methodological: for the tetrahedral families introduced here, questions of (efficient) localizability and entanglement cost become analyzable in a way that is typically out of reach for generic measurements. Indeed, outside highly structured settings, determining whether a given multipartite projective measurement can be implemented under restricted communication models is essentially intractable; here, symmetry and phase-gate parametrization reduce the problem to a controlled analysis in terms of the underlying diagonal structure and its position in the relevant hierarchy.

A second motivation comes from quantum networks. Joint measurements are the operational ``connectors'' that turn independently distributed resources into nontrivial network correlations, and their symmetries often translate directly into symmetries of the induced correlations, which can dramatically simplify both analytic and numerical studies~\cite{Gitton2024}. This was one of the original motivations for introducing the EJM as a highly symmetric intermediate measurement~\cite{Gisin2019}, and related work has already explored how tuning the entanglement within a structured EJM-like family affects robustness of network nonlocality in bilocal scenarios~\cite{Tavakoli2021}. The present three-parameter tetrahedral family in Eq.~\eqref{eq:bell-superposition} provides a richer testbed for extending such questions beyond one-dimensional interpolations (e.g., beyond the BSM--EJM path), and for systematically exploring how symmetry, entanglement, and implementational complexity trade off in network tasks. Finally, in ``nonlocality-without-inputs'' settings, genuinely non-Clifford elements are required to generate nonclassical correlations~\cite{GattoLamas2023}; our constructions supply explicit low-complexity non-Clifford candidates with strong symmetry constraints, making them natural targets for both theory and experiment.

Aside from investigating these applications, several open directions emerge from this study. First, a natural goal is to develop a systematic method for imposing additional local geometric constraints on the single-qubit reductions of fiducial states. In particular, identifying general conditions under which the single-qubit reductions lie on specific geometric configurations—such as regular tetrahedra, squares, or other symmetric constellations. 

A second avenue concerns the classification of localizable tetrahedral bases in higher dimensions. While the two-qudit measurement proposed in Ref.~\cite{Czartowski2021} generalizes some geometric features of EJM, it is not efficiently localizable. Can we find a family of two-qudit bases that is? Establishing such constructions would be of practical interest for entanglement-assisted protocols in quantum networks and for implementing symmetric measurements in measurement-based quantum computation.

One may ask whether the local symmetry exploited here—tetrahedral symmetry generated by the single-qubit Pauli group—admits meaningful generalizations. While the Pauli group preserves the tetrahedron, it does not generate all polyhedral symmetries—for example, mapping the tetrahedron to its dual requires rotations outside the group. The Clifford group, for example, includes such transformations, mapping Pauli axes onto each other and sending the tetrahedron to the octahedron. With suitable fiducial states, Clifford-generated orbits could therefore produce bases whose Bloch vectors lie on other regular solids. However, our construction relies essentially on the Pauli group being projectively abelian—tensor products commute up to a phase, allowing these phases to cancel and Schur's lemma to enforce basis completeness. Clifford tensor-product representations are non-abelian and lack this structure. More generally, among finite subgroups of $SU(2)$, only cyclic groups and the Pauli group have this property, suggesting that the Pauli group is uniquely suited for generating locally covariant, symmetric bases via group orbits.

While our construction of party-permutation-invariant (PPI) bases focused on odd \( n \), it remains open whether similar symmetric structures exist for even numbers of qubits. The obstruction stems from the specific abelian Pauli subgroup used, which fails to support PPI fiducial states in the even case. However, alternative abelian subgroups of order \( 2^n \) may still allow the construction of party-permutation-invariant orthonormal bases for even \( n \). 

Finally, the construction presented here can be viewed as a particular instance of a more general framework for building measurement bases based on a particular choice of stabilizer group. It is an interesting open question to explore how the choice of stabilizer group influences the entanglement and geometric properties of the resulting measurement bases. We leave this exploration for future work.

We conclude that the interplay between symmetry, local encodability, and localization complexity offers a toolbox for constructing and classifying measurement bases. The use of projectively abelian groups, discrete geometric symmetries, and Clifford hierarchy constraints opens the door to new classes of measurements with practical and theoretical relevance. Future work may reveal deeper connections between these constructions and broader topics such as SIC-POVMs and measurement-based quantum computing.

\textbf{Acknowledgements.}
We thank Matthias Christandl, Otfried Gühne, Teiko Heinosaari, Pauli Jokinen and Pavel Sekatski for discussions.
We acknowledge financial support from Swiss National Science Foundation (NCCR-SwissMAP and grant number 224561).

\bibliography{refs.bib}

@Article{Popp2024,
  author    = {Popp, Christopher and Hiesmayr, Beatrix C},
  journal   = {New J. Phys.},
  title     = {Special features of the {W}eyl–{H}eisenberg {B}ell basis imply unusual entanglement structure of {B}ell-diagonal states},
  year      = {2024},
  issn      = {1367-2630},
  month     = jan,
  number    = {1},
  pages     = {013039},
  volume    = {26},
  doi       = {10.1088/1367-2630/ad1d0e},
  publisher = {IOP Publishing},
}

@book{Waldron2018,
  title = {An Introduction to Finite Tight Frames},
  ISBN = {9780817648152},
  ISSN = {2296-5017},
  url = {http://dx.doi.org/10.1007/978-0-8176-4815-2},
  DOI = {10.1007/978-0-8176-4815-2},
  journal = {Applied and Numerical Harmonic Analysis},
  publisher = {Springer New York},
  author = {Waldron,  Shayne F. D.},
  year = {2018}
}

@article{GattoLamas2023,
   title={Multipartite Nonlocality in Clifford Networks},
   volume={130},
   ISSN={1079-7114},
   url={http://dx.doi.org/10.1103/PhysRevLett.130.240802},
   DOI={10.1103/physrevlett.130.240802},
   number={24},
   journal={Physical Review Letters},
   publisher={American Physical Society (APS)},
   author={Gatto Lamas, Amanda and Chitambar, Eric},
   year={2023},
   month=jun }

@article{Toth2005,
  title = {Entanglement detection in the stabilizer formalism},
  author = {T\'oth, G\'eza and G\"uhne, Otfried},
  journal = {Phys. Rev. A},
  volume = {72},
  issue = {2},
  pages = {022340},
  numpages = {14},
  year = {2005},
  month = {Aug},
  publisher = {American Physical Society},
  doi = {10.1103/PhysRevA.72.022340},
  url = {https://link.aps.org/doi/10.1103/PhysRevA.72.022340}
}

@InBook{Krasikov2001,
  author    = {Krasikov, Ilia and Litsyn, Simon},
  pages     = {199--211},
  publisher = {American Mathematical Society},
  title     = {Survey of binary Krawtchouk polynomials},
  year      = {2001},
  month     = feb,
  booktitle = {Codes and Association Schemes},
  doi       = {10.1090/dimacs/056/16},
  issn      = {2472-4793},
}

@article{Lemmens1973,
  title = {Equiangular lines},
  volume = {24},
  ISSN = {0021-8693},
  url = {http://dx.doi.org/10.1016/0021-8693(73)90123-3},
  DOI = {10.1016/0021-8693(73)90123-3},
  number = {3},
  journal = {J. Algebra},
  publisher = {Elsevier BV},
  author = {Lemmens,  P.W.H and Seidel,  J.J},
  year = {1973},
  month = mar,
  pages = {494–512}
}

@phdthesis{Zauner1999,
  author    = {Zauner, G.},
  title     = {Quantendesigns: Grundzüge einer nichtkommutativen Designtheorie},
  school    = {University of Vienna},
  year      = {1999},
  note      = {English translation available at \href{https://doi.org/10.1142/S0219749911006776}{Int. J. Quantum Inf. {\bf 09}, 445--507 (2011)} }
}

@article{Renes2004,
   title={Symmetric informationally complete quantum measurements},
   volume={45},
   ISSN={1089-7658},
   url={http://dx.doi.org/10.1063/1.1737053},
   DOI={10.1063/1.1737053},
   number={6},
   journal={J. Math. Phys.},
   publisher={AIP Publishing},
   author={Renes, Joseph M. and Blume-Kohout, Robin and Scott, A. J. and Caves, Carlton M.},
   year={2004},
   month=jun, pages={2171–2180} }

@article{Tanaka007,
   title={Local encoding of classical information onto quantum states},
   volume={54},
   ISSN={1362-3044},
   url={http://dx.doi.org/10.1080/09500340701403301},
   DOI={10.1080/09500340701403301},
   number={13–15},
   journal={J. Mod. Opt.},
   publisher={Informa UK Limited},
   author={Tanaka, Yu and Markham, Damian and Murao, Mio},
   year={2007},
   month=sep, pages={2259–2273} }

@phdthesis{Chiribella2004,
  author    = {Chiribella, Giulio},
  title     = {Optimal Estimation of Quantum Signals in the Presence of Symmetry},
  school    = {Università degli Studi di Pavia},
  year      = {2004},
  note      = {Available at \url{https://www.quantumtechnologies.it/wp-content/uploads/thesis/ThesisRevised.pdf}}
}

@article{Tavakoli2021,
  title = {Bilocal {B}ell Inequalities Violated by the Quantum Elegant Joint Measurement},
  author = {Tavakoli, Armin and Gisin, Nicolas and Branciard, Cyril},
  journal = {Phys. Rev. Lett.},
  volume = {126},
  issue = {22},
  pages = {220401},
  numpages = {6},
  year = {2021},
  month = {Jun},
  publisher = {American Physical Society},
  doi = {10.1103/PhysRevLett.126.220401},
  url = {https://link.aps.org/doi/10.1103/PhysRevLett.126.220401}
}

@Article{Gisin2019,
  author    = {Gisin, Nicolas},
  journal   = {Entropy},
  title     = {Entanglement 25 Years after Quantum Teleportation: Testing Joint Measurements in Quantum Networks},
  year      = {2019},
  issn      = {1099-4300},
  month     = mar,
  number    = {3},
  pages     = {325},
  volume    = {21},
  doi       = {10.3390/e21030325},
  groups    = {Impossible measurements},
  publisher = {MDPI AG},
}

@Article{Czartowski2021,
  author    = {Czartowski, Jakub and Życzkowski, Karol},
  journal   = {Quantum},
  title     = {Bipartite quantum measurements with optimal single-sided distinguishability},
  year      = {2021},
  issn      = {2521-327X},
  month     = apr,
  pages     = {442},
  volume    = {5},
  doi       = {10.22331/q-2021-04-26-442},
  groups    = {Impossible measurements},
  publisher = {Verein zur Forderung des Open Access Publizierens in den Quantenwissenschaften},
}

@article{Pauwels2025,
  title = {Classification of Joint Quantum Measurements Based on Entanglement Cost of Localization},
  author = {Pauwels, Jef and Pozas-Kerstjens, Alejandro and {del} Santo, Flavio and Gisin, Nicolas},
  journal = {Phys. Rev. X},
  volume = {15},
  issue = {2},
  pages = {021013},
  numpages = {20},
  year = {2025},
  month = {Apr},
  publisher = {American Physical Society},
  doi = {10.1103/PhysRevX.15.021013},
  url = {https://link.aps.org/doi/10.1103/PhysRevX.15.021013}
}

@Article{Vaidman2003,
  author    = {Vaidman, Lev},
  journal   = {Phys. Rev. Lett.},
  title     = {Instantaneous Measurement of Nonlocal Variables},
  year      = {2003},
  issn      = {1079-7114},
  month     = jan,
  number    = {1},
  pages     = {010402},
  volume    = {90},
  doi       = {10.1103/physrevlett.90.010402},
  groups    = {Impossible measurements},
  publisher = {American Physical Society (APS)},
}

@Article{Groisman2003,
  author    = {Groisman, Berry and Reznik, Benni and Vaidman, Lev},
  journal   = {J. Mod. Opt.},
  title     = {Instantaneous measurements of nonlocal variables},
  year      = {2003},
  issn      = {1362-3044},
  month     = apr,
  number    = {6–7},
  pages     = {943--949},
  volume    = {50},
  doi       = {10.1080/09500340308234543},
  groups    = {Impossible measurements},
  publisher = {Informa UK Limited},
}

@Article{Tavakoli2021a,
  author        = {Tavakoli, Armin and Pauwels, Jef and Woodhead, Erik and Pironio, Stefano},
  journal       = {PRX Quantum},
  title         = {Correlations in Entanglement-Assisted Prepare-and-Measure Scenarios},
  year          = {2021},
  month         = {Dec},
  pages         = {040357},
  volume        = {2},
  doi           = {10.1103/PRXQuantum.2.040357},
  groups        = {adaptive, almost qudits},
  issue         = {4},
  numpages      = {17},
  publisher     = {American Physical Society}
}

@Book{Bell1988,
  author    = {Bell, John S.},
  publisher = {Cambridge University Press},
  title     = {Speakable and Unspeakable in Quantum Mechanics: Collected Papers on Quantum Philosophy},
  year      = {1988},
  url       = {https://doi.org/10.1017/CBO9780511815676},
}

@article{Horodecki2009,
  title = {Quantum entanglement},
  author = {Horodecki, Ryszard and Horodecki, Pawe\l{} and Horodecki, Micha\l{} and Horodecki, Karol},
  journal = {Rev. Mod. Phys.},
  volume = {81},
  issue = {2},
  pages = {865--942},
  numpages = {0},
  year = {2009},
  month = {Jun},
  publisher = {American Physical Society},
  doi = {10.1103/RevModPhys.81.865},
  url = {https://link.aps.org/doi/10.1103/RevModPhys.81.865}
}

@article{Jozsa2003,
   title={On the role of entanglement in quantum-computational speed-up},
   volume={459},
   ISSN={1471-2946},
   url={http://dx.doi.org/10.1098/rspa.2002.1097},
   DOI={10.1098/rspa.2002.1097},
   number={2036},
   journal={Proc. R. Soc. Lond. A.},
   publisher={The Royal Society},
   author={Jozsa, Richard and Linden, Noah},
   year={2003},
   month=aug, pages={2011–2032} }

@article{Vazirani2014,
  title = {Fully Device-Independent Quantum Key Distribution},
  author = {Vazirani, Umesh and Vidick, Thomas},
  journal = {Phys. Rev. Lett.},
  volume = {113},
  issue = {14},
  pages = {140501},
  numpages = {6},
  year = {2014},
  month = {Sep},
  publisher = {American Physical Society},
  doi = {10.1103/PhysRevLett.113.140501},
  url = {https://link.aps.org/doi/10.1103/PhysRevLett.113.140501}
}

@Article{Cavalcanti2023,
  author    = {Cavalcanti, Eric G. and Chaves, Rafael and Giacomini, Flaminia and Liang, Yeong-Cherng},
  journal   = {Nat. Rev. Phys.},
  title     = {Fresh perspectives on the foundations of quantum physics},
  year      = {2023},
  issn      = {2522-5820},
  month     = apr,
  number    = {6},
  pages     = {323--325},
  volume    = {5},
  doi       = {10.1038/s42254-023-00586-z},
  groups    = {Impossible measurements},
  publisher = {Springer Science and Business Media LLC},
}

@article{Tavakoli2022,
   title={Bell nonlocality in networks},
   volume={85},
   ISSN={1361-6633},
   url={http://dx.doi.org/10.1088/1361-6633/ac41bb},
   DOI={10.1088/1361-6633/ac41bb},
   number={5},
   journal={Rep. Prog. Phys.},
   publisher={IOP Publishing},
   author={Tavakoli, Armin and Pozas-Kerstjens, Alejandro and Luo, Ming-Xing and Renou, Marc-Olivier},
   year={2022},
   month=mar, pages={056001} }

@article{Caleffi2024,
   title={Distributed quantum computing: A survey},
   volume={254},
   ISSN={1389-1286},
   url={http://dx.doi.org/10.1016/j.comnet.2024.110672},
   DOI={10.1016/j.comnet.2024.110672},
   journal={Comput. Netw.},
   publisher={Elsevier BV},
   author={Caleffi, Marcello and Amoretti, Michele and Ferrari, Davide and Illiano, Jessica and Manzalini, Antonio and Cacciapuoti, Angela Sara},
   year={2024},
   month=dec, pages={110672} }

@article{Briegel1998,
  title = {Quantum Repeaters: The Role of Imperfect Local Operations in Quantum Communication},
  author = {Briegel, Hans J. and D\"ur, Wolfgang and Cirac, J. Ignacio and Zoller, Peter},
  journal = {Phys. Rev. Lett.},
  volume = {81},
  issue = {26},
  pages = {5932--5935},
  numpages = {0},
  year = {1998},
  month = {Dec},
  publisher = {American Physical Society},
  doi = {10.1103/PhysRevLett.81.5932},
  url = {https://link.aps.org/doi/10.1103/PhysRevLett.81.5932}
}

@unpublished{Gitton2024,
       author = {{Gitton}, Victor and {Renner}, Renato},
        title = "{The Elegant Joint Measurement is Non-Classical in the Triangle Network}",
         year = 2025,
        month = oct,
archivePrefix = {arXiv},
       eprint = {2510.15143},
 primaryClass = {quant-ph},
}

@article{Schur1905,
  author    = {Issai Schur},
  title     = {Neue Begrundung der Theorie der Gruppencharaktere},
  journal   = {Sitzungsberichte der Koenigl. Preuss. Akad. Wiss. Berlin},
  year      = {1905},
  note      = {{E}nglish title: New foundation for the theory of group characters}
}

@article{Gross2006,
  author    = {David Gross},
  title     = {Hudson's theorem for finite-dimensional quantum systems},
  journal   = {J. Math. Phys.},
  volume    = {47},
  number    = {12},
  pages     = {122107},
  year      = {2006},
  doi       = {10.1063/1.2393152}
}

@ARTICLE{Chiribella2006,
       author = {{Chiribella}, Giulio and {Mauro D'Ariano}, Giacomo},
        title = "{Extremal covariant measurements}",
      journal = {J. Math. Phys.},
     keywords = {03.65.Ta, 03.65.Fd, 02.20.Uw, Foundations of quantum mechanics, measurement theory, Algebraic methods, Quantum groups, Quantum Physics},
         year = 2006,
        month = sep,
       volume = {47},
       number = {9},
        pages = {092107-092107},
          doi = {10.1063/1.2349481},
       adsurl = {https://ui.adsabs.harvard.edu/abs/2006JMP....47i2107C},
      adsnote = {Provided by the SAO/NASA Astrophysics Data System}
}

@article{Chiribella2004a,
  title = {Covariant quantum measurements that maximize the likelihood},
  author = {Chiribella, Giulio and D'Ariano, Giacomo Mauro and Perinotti, Paolo and Sacchi, Massimiliano F.},
  journal = {Phys. Rev. A},
  volume = {70},
  issue = {6},
  pages = {062105},
  numpages = {8},
  year = {2004},
  month = {Dec},
  publisher = {American Physical Society},
  doi = {10.1103/PhysRevA.70.062105},}

@misc{Holevo1982,
  author       = {A. S. Holevo},
  title        = {Probabilistic and Statistical Aspects of Quantum Theory},
  publisher    = {North-Holland Publishing Co.},
  address      = {Amsterdam},
  year         = {1982},
  isbn         = {9780444863339},
  note         = {Translated from the Russian by R. A. Silverman. Originally published in Russian in 1974},
}

@Article{Gottesman1999,
  author    = {Gottesman, Daniel and Chuang, Isaac L.},
  journal   = {Nature},
  title     = {Demonstrating the viability of universal quantum computation using teleportation and single-qubit operations},
  year      = {1999},
  issn      = {1476-4687},
  month     = nov,
  number    = {6760},
  pages     = {390--393},
  volume    = {402},
  doi       = {10.1038/46503},
  groups    = {Impossible measurements},
  publisher = {Springer Science and Business Media LLC},
}

@phdthesis{Gottesman1997,
  author       = {Daniel E. Gottesman},
  title        = {Stabilizer Codes and Quantum Error Correction},
  school       = {California Institute of Technology},
  year         = {1997},
  doi          = {10.7907/RZR7-DT72}
}

@unpublished{silva2025,
    title={The Clifford hierarchy for one qubit or qudit},
    author={Nadish de Silva and Oscar Lautsch},
    year={2025},
    eprint={2501.07939},
    archivePrefix={arXiv},
    primaryClass={quant-ph}
}

@article{Gottesman2017,
  title = {Diagonal gates in the Clifford hierarchy},
  author = {Cui, Shawn X. and Gottesman, Daniel and Krishna, Anirudh},
  journal = {Phys. Rev. A},
  volume = {95},
  issue = {1},
  pages = {012329},
  numpages = {7},
  year = {2017},
  month = {Jan},
  publisher = {American Physical Society},
  doi = {10.1103/PhysRevA.95.012329},
  url = {https://link.aps.org/doi/10.1103/PhysRevA.95.012329}
}

@article{Bennet1993,
  title = {Teleporting an unknown quantum state via dual classical and Einstein-Podolsky-Rosen channels},
  author = {Bennett, Charles H. and Brassard, Gilles and Cr\'epeau, Claude and Jozsa, Richard and Peres, Asher and Wootters, William K.},
  journal = {Phys. Rev. Lett.},
  volume = {70},
  issue = {13},
  pages = {1895--1899},
  numpages = {0},
  year = {1993},
  month = {Mar},
  publisher = {American Physical Society},
  doi = {10.1103/PhysRevLett.70.1895},
  url = {https://link.aps.org/doi/10.1103/PhysRevLett.70.1895}
}

@Article{Piveteau2022,
  author        = {{Piveteau}, Am{\'e}lie and {Pauwels}, Jef and {H{\^a}kansson}, Emil and {Muhammad}, Sadiq and {Bourennane}, Mohamed and {Tavakoli}, Armin},
  journal       = {Nat. Commun.},
  title         = {{Entanglement-assisted quantum communication with simple measurements}},
  year          = {2022},
  month         = dec,
  pages         = {7878},
  volume        = {13},
  adsnote       = {Provided by the SAO/NASA Astrophysics Data System},
  adsurl        = {https://ui.adsabs.harvard.edu/abs/2022NatCo..13.7878P},
  archiveprefix = {arXiv},
  doi           = {10.1038/s41467-022-33922-5},
  eid           = {7878}
}

@Article{DelSanto2024,
  author    = {Del Santo, Flavio and Czartowski, Jakub and \ifmmode \dot{Z}\else \.{Z}\fi{}yczkowski, Karol and Gisin, Nicolas},
  journal   = {Phys. Rev. Res.},
  title     = {Iso-entangled bases and joint measurements},
  year      = {2024},
  month     = {Apr},
  pages     = {023085},
  volume    = {6},
  doi       = {10.1103/PhysRevResearch.6.023085},
  groups    = {Impossible measurements},
  issue     = {2},
  numpages  = {9},
  publisher = {American Physical Society},
  url       = {https://link.aps.org/doi/10.1103/PhysRevResearch.6.023085},
}

@Article{Popescu1994,
  author    = {Popescu, Sandu and Vaidman, Lev},
  journal   = {Phys. Rev. A},
  title     = {Causality constraints on nonlocal quantum measurements},
  year      = {1994},
  issn      = {1094-1622},
  month     = jun,
  number    = {6},
  pages     = {4331--4338},
  volume    = {49},
  doi       = {10.1103/physreva.49.4331},
  groups    = {Impossible measurements},
  publisher = {American Physical Society (APS)},
}

@article{Pimpel2023,
  title = {Correspondence between entangled states and entangled bases under local transformations},
  author = {Pimpel, Florian and Renner, Martin J. and Tavakoli, Armin},
  journal = {Phys. Rev. A},
  volume = {108},
  issue = {2},
  pages = {022220},
  numpages = {9},
  year = {2023},
  month = {Aug},
  publisher = {American Physical Society},
  doi = {10.1103/PhysRevA.108.022220},
  url = {https://link.aps.org/doi/10.1103/PhysRevA.108.022220}
}

@unpublished{zhang2025,
    title={Quantum stochastic communication via high-dimensional entanglement},
    author={Chao Zhang and Jia-Le Miao and Xiao-Min Hu and Jef Pauwels and Yu Guo and Chuan-Feng Li and Guang-Can Guo and Armin Tavakoli and Bi-Heng Liu},
    year={2025},
    eprint={2502.04887},
    archivePrefix={arXiv},
    primaryClass={quant-ph}
}

@unpublished{Boreiri2023,
  author    = {Boreiri, Sadra and Ulu, Bora and Brunner, Nicolas and Sekatski, Pavel},
  title     = {Noise-robust proofs of quantum network nonlocality},
  year      = {2023},
  copyright = {Creative Commons Attribution 4.0 International},
  doi       = {10.48550/ARXIV.2311.02182},
  groups    = {Impossible measurements},
  keywords  = {Quantum Physics (quant-ph), FOS: Physical sciences, FOS: Physical sciences},
  publisher = {arXiv},
    eprint={2311.02182},
    archivePrefix={arXiv},
    primaryClass={quant-ph}
}

@book{coxeter1973,
  title={Regular Polytopes},
  author={Coxeter, H.S.M.},
  isbn={9780486614809},
  lccn={73084364},
  series={Dover books on advanced mathematics},
  url={https://books.google.ch/books?id=iWvXsVInpgMC},
  year={1973},
  publisher={Dover Publications}
}

@article{Kraus2009,
  title = {Local entanglability and multipartite entanglement},
  author = {Kruszynska, C. and Kraus, B.},
  journal = {Phys. Rev. A},
  volume = {79},
  issue = {5},
  pages = {052304},
  numpages = {4},
  year = {2009},
  month = {May},
  publisher = {American Physical Society},
  doi = {10.1103/PhysRevA.79.052304},
  url = {https://link.aps.org/doi/10.1103/PhysRevA.79.052304}
}

@ARTICLE{Dang2013,
       author = {{Dang}, Hoan Bui and {Blanchfield}, Kate and {Bengtsson}, Ingemar and {Appleby}, D.~M.},
        title = "{Linear dependencies in Weyl-Heisenberg orbits}",
      journal = {Quantum Inf. Process.},
     keywords = {SIC-POVMs, Weyl-Heisenberg group, Elliptic curves, Hesse configuration, Linear dependencies, Mathematical Physics, Quantum Physics},
         year = 2013,
        month = nov,
       volume = {12},
       number = {11},
        pages = {3449-3475},
          doi = {10.1007/s11128-013-0609-6}
}

@article{hm9n-mkb3,
  title = {Multiqubit Elegant Joint Measurement},
  author = {Pauwels, Jef and Gisin, Nicolas},
  journal = {Phys. Rev. Lett.},
  volume = {136},
  issue = {19},
  pages = {190201},
  numpages = {7},
  year = {2026},
  month = {May},
  publisher = {American Physical Society},
  doi = {10.1103/hm9n-mkb3},
  url = {https://link.aps.org/doi/10.1103/hm9n-mkb3}
}

@misc{CompAppendix,
  author       = {Jef Pauwels},
  title        = {Computational Appendix},
  year         = {2025},
  howpublished = {\url{https://github.com/jefpauwels/TetrahedralBases}},
  note         = {Version: 2025-07-22}
}

\clearpage

\appendix

\section{Joint eigenbases for the elements of $G_{\mathrm{tetra}}^{(n)}$ and $G_d^{(n)}$}
\label{app:eigenbases}

In this appendix, we show explicit how one may compute the joint eigenbases of $G_d^{(n)}$ ($=G_{\mathrm{tetra}}^{(n)}$ for $d=2$).

For simplicity, we start with the $n$-qubit case. Rather than diagonalizing \( G_{\mathrm{tetra}}^{(n)} \) directly, we proceed in two steps. First, recalling that $G_{\mathrm{tetra}}^{(2)}\cong\mathbb{Z}_2^{\,n}$, we identify a simpler representation of the same group 
\begin{equation}
  G_{\vec Z X}^{(n)}
  \;=\;
  \bigl\langle
      Z^{(1)},\,Z^{(2)},\,\dots,\,Z^{(n-1)},\,X^{(n)}
  \bigr\rangle
  \;\cong\;
  \mathbb{Z}_2^{\,n}.
\end{equation}
Its simultaneous eigenbasis is the set of product states:
\begin{equation}
  \ket{\vec{z},x}_{\vec Z X}
  \;=\;
  \bigotimes_{i=1}^{n-1} \ket{z_i}_Z \otimes \ket{x}_X,
\end{equation}
where $\vec{z}=(z_1,\ldots,z_{n-1}) \in \{0,1\}^{n-1}, x \in \{0,1\}$ and with $\ket{\cdot}_Z$ and $\ket{\cdot}_X$ denoting eigenstates of the $Z$ and $X$ Pauli operators, respectively (i.e., $\ket{0/1}_Z$ are the computational basis states $\ket{0/1}$, and $\ket{x}_X = \frac{1}{\sqrt{2}}(\ket{0}+(-1)^x\ket{1})$).

Next, introduce the “right-to-left” CNOT chain
\begin{equation}
S(n) = \mathrm{CNOT}_{(2 \to 1)}\cdots\mathrm{CNOT}_{(n \to n-1)},
\end{equation}
as in Eq.~\eqref{eq:CNOTchain} of the main text.
One can verify that
\begin{align}
  S(n)\,Z^{(i)}\,S(n)^\dagger &= Z^{(i)} Z^{(i+1)} \qquad \text{for } i < n, \nonumber \\
  S(n)\,X^{(n)}\,S(n)^\dagger &= X^{\otimes n}.
\end{align}
This follows from standard CNOT conjugation rules: \( \mathrm{CNOT}_{(i \to j)} \) maps \( Z^{(j)} \mapsto Z^{(i)} Z^{(j)} \) and \( X^{(i)} \mapsto X^{(i)} X^{(j)} \), 
thereby propagating \( Z \) forward and \( X \) backward along the chain.
Hence \( S(n) \) conjugates \( G_{\vec Z X}^{(n)} \) to \( G_{\mathrm{tetra}}^{(n)} \).

Applying \( S(n) \) to each product eigenstate then yields the joint eigenbasis of \( G_{\mathrm{tetra}}^{(n)} \), as in Eq.~\eqref{eq:FourierTetra}:
\begin{equation}
    \ket{\Phi_{\vec{z},x}}
    = S(n)\;
    \ket{\vec{z},x}_{\vec Z X}, \quad (\vec{z},x) \in \{0,1\}^n.
\end{equation}
More explicitly, one finds
\begin{align}
    \ket{\Phi_{\vec{z},x}}
    = \tfrac{1}{\sqrt{2}} & \big( \ket{s_1,\ldots,s_{n-1},0} \notag \\
    & +(-1)^x \ket{s_1\oplus1,\ldots,s_{n-1}\oplus1,1} \!\big),
\end{align}
with each $s_i:=\bigoplus_{j=i}^{n-1}z_j$.
Notice that each of these states is entangled (of the GHZ type) for any \( n \geq 2 \), and satisfies
\begin{align}
Z^{(i)} Z^{(i+1)}\ket{\Phi_{\vec{z},x}} &= (-1)^{z_i} \ket{\Phi_{\vec{z},x}}, \notag \\
X^{\otimes n} \ket{\Phi_{\vec{z},x}} &= (-1)^{x} \ket{\Phi_{\vec{z},x}}.
\end{align}

\medskip

The qu$d$it case follows precisely the same logic. That is, we start from  the simpler representation \(\langle Z_d^{(1)}, \dots, Z_d^{(n-1)}, X_d^{(n)} \rangle\) using the following sequence of SUM gates (as defined in Eq.~\eqref{eq:SUMchain} of the main text):
\begin{equation}
S_d(n) = \mathrm{SUM}_{(2 \to 1)}\, \mathrm{SUM}_{(3 \to 2)}\, \cdots\, \mathrm{SUM}_{(n \to n-1)}.
\end{equation}
These satisfy
\begin{align} \label{eq:change_group_rep}
  S_d(n)\,Z_d^{(i)}\,S_d(n)^\dagger &= Z_d^{(i)} {Z_d^{(i+1)}}^* \qquad \text{for } i < n, \nonumber \\
  S_d(n) X_d^{(n)} S_d(n)^\dagger & = S_d(n) {H_d^{(n)}}^\dagger Z_d^{(n)} H_d^{(n)} S_d(n)^\dagger = X_d^{\otimes n},
\end{align}
where $H_d =\sum_{j,k=0}^{d-1} \omega_d^{j k}\ketbra{j}{k}$ (written in the computational basis, with $\omega_d = e^{2i\pi/d}$) is the generalized $d$-dimensional Hadamard gate, or discrete Fourier transform.
Applying \(S_d(n)\) to the product basis
\begin{equation}
\ket{\vec{z},x}_{\vec Z X} = 
\bigotimes_{i=1}^{n-1} \ket{z_i}_{Z_d} \otimes \ket{x}_{X_d},
\quad
\vec{z} \in \mathbb{Z}_d^{n-1},\;
x \in \mathbb{Z}_d,
\end{equation}
with $\ket{\cdot}_{Z_d}$ and $\ket{\cdot}_{X_d}$ denoting eigenstates of the $Z_d$ and $X_d$ Pauli operators (i.e., the former corresponding to the computational basis and the latter being defined as $\ket{x}_{X_d} = H_d^\dagger \ket{x}_{Z_d} = \frac{1}{\sqrt{d}}\sum_{j=0}^{d-1} \omega_d^{-jx}\ket{j}$, such that $X_d \ket{x}_{X_d} = \omega_d^x \ket{x}_{X_d}$)
yields the joint eigenbasis of \(G_d^{(n)}\), as in Eq.~\eqref{eq:Fourier-qudit-final}:
\begin{equation}
\ket{\Phi^{(d)}_{\vec{z},x}} = S_d(n) \ket{\vec{z},x}_{\vec Z X},
\quad
(\vec{z},x) \in \mathbb{Z}_d^n.
\end{equation}
More explicitly, one finds
\begin{align}
    \ket{\Phi^{(d)}_{\vec{z},x}}
    = \frac{1}{\sqrt{d}} \sum_{k=0}^{d-1} \omega_d^{-k x} \ket{s_1\oplus k,\ldots,s_{n-1}\oplus k,k},
\end{align}
with again each $s_i:=\bigoplus_{j=i}^{n-1}z_j$ (and with $\oplus$ denoting here addition modulo $d$).
These satisfy
\begin{align}
Z_d^{(i)} {Z_d^{(i+1)}}^*\ket{\Phi^{(d)}_{\vec{z},x}} &= \omega_d^{z_i} \ket{\Phi^{(d)}_{\vec{z},x}}, \notag \\
X_d^{\otimes n} \ket{\Phi^{(d)}_{\vec{z},x}} &= \omega_d^{x} \ket{\Phi^{(d)}_{\vec{z},x}}.
\end{align}

\section{Party-Permutation-Invariant fiducial states}
\label{app:PPI}

In this appendix we justify the form of $\ket{\psi_{\mathrm{PPI}}^{(n)}}$, with its parameters $a_k$ and $\alpha_k$ satisfying Eq.~\eqref{eq:psiPPI_amplitudes_phases} for odd $n$, and show that no PPI fiducial state exists for even $n$.

Consider the super-normalized Dicke states $\ket{D_k} = \sum_{\vec z:\mathrm{wt}(\vec z)=k} \ket{\vec z}$. One can first show that
\begin{align}
    & \bra{D_k}(\underbrace{Z\cdots Z}_{m}\underbrace{\mathds{1}\cdots\mathds{1}}_{n-m})\ket{D_{k'}} = \delta_{k,k'} \,K^n_k(m), \notag \\
    & \bra{D_k}(\underbrace{Z\cdots Z}_{m}\underbrace{\mathds{1}\cdots\mathds{1}}_{n-m})(\underbrace{X\cdots X}_n)\ket{D_{k'}} = \delta_{k,n-k'} \,K^n_k(m), \label{eq:DkZZIIDk}
\end{align}
where we introduced the so-called binary Krawtchouk polynomials~\cite{Krasikov2001}
\begin{equation} \label{eq:def_Krawtchouk}
    K^n_k(m) := \sum_{m_1=0}^m (-1)^{m_1} \binom{m}{m_1} \binom{n-m}{k-m_1}. 
\end{equation}
The first line of Eq.~\eqref{eq:DkZZIIDk} can be obtained by noting that
\(\bra{D_k}(Z\cdots Z\,\mathds{1}\cdots\mathds{1})\ket{D_{k'}} 
=\delta_{k,k'} \,\sum_{\vec z:\,\mathrm{wt}(\vec z)=k} \bra{\vec z}(Z\cdots Z\,\mathds{1}\cdots\mathds{1})\ket{\vec z}\), in which each term \(\bra{\vec z}(Z\cdots Z\,\mathds{1}\cdots\mathds{1})\ket{\vec z}\) is equal to \((-1)^{m_1}\), 
where \(m_1\) is the number of \(1\)’s in the first \(m\) positions of \(\vec z\). 
The number of such \(\vec z\) with Hamming weight \(k\) and exactly \(m_1\) \(1\)’s in the first \(m\) positions is 
\(\binom{m}{m_1} \binom{n-m}{\,k-m_1\,}\). 
Summing over \(m_1\) yields the expression for \(K^n_k(m)\) in Eq.~\eqref{eq:def_Krawtchouk}. 
The second line of Eq.~\eqref{eq:DkZZIIDk} then follows from the first by noting that 
\((X\cdots X)\ket{D_{k'}} = \ket{D_{n-k'}}\).

For a PPI state
\begin{equation}
\ket{\psi_{\mathrm{PPI}}^{(n)}} = \sum_{k=0}^{n} a_k\, e^{i \alpha_k} \ket{D_k} \,
\end{equation}
as in Eq.~\eqref{eq:PPI-Dicke}, we then have
\begin{align}
    & \bra{\psi_{\mathrm{PPI}}^{(n)}}(\underbrace{Z\cdots Z}_{m}\underbrace{\mathds{1}\cdots\mathds{1}}_{n-m})\ket{\psi_{\mathrm{PPI}}^{(n)}} = \sum_{k=0}^n a_k^2 \, K^n_k(m), \notag \\[1mm]
    & \bra{\psi_{\mathrm{PPI}}^{(n)}}(\overbrace{Z\cdots Z}^{m}\overbrace{\mathds{1}\cdots\mathds{1}}^{n-m})(\overbrace{X\cdots X}^n)\ket{\psi_{\mathrm{PPI}}^{(n)}} \notag \\[-2mm]
    & \hspace{20mm} = \sum_{k=0}^n a_k\,a_{n-k} \,e^{i(\alpha_{n-k}-\alpha_k)} \, K^n_k(m).
\end{align}

Notice that $K^n_k(m) = (-1)^m K^n_{n-k}(m)$. Consider from now on even values of $m$ (change $m$ to $2m$ above), so that $K^n_k(2m) = K^n_{n-k}(2m)$; we then get
\begin{align}
    & \bra{\psi_{\mathrm{PPI}}^{(n)}}(\underbrace{Z\cdots Z}_{2m}\underbrace{\mathds{1}\cdots\mathds{1}}_{n-2m})\ket{\psi_{\mathrm{PPI}}^{(n)}} \notag \\
    & = \left\{ \begin{array}{ll}
        \sum_{k=0}^{\frac{n-1}{2}} q_k \, K^n_k(2m) & \ \text{for odd $n$} \\[1mm]
        \sum_{k=0}^{\frac{n}{2}-1} q_k \, K^n_k(2m) + a_{\frac{n}{2}}^2 \, K^n_{\frac{n}{2}}(2m)  & \ \text{for even $n$}
    \end{array} \right.
\end{align}
with $q_k := a_k^2+a_{n-k}^2$.
For $\bra{\psi_{\mathrm{PPI}}^{(n)}}(Z\cdots Z\mathds{1}\cdots\mathds{1})$ $(X\cdots X)\ket{\psi_{\mathrm{PPI}}^{(n)}}$ we obtain similar expressions, just replacing each $q_k$ by $r_k := 2a_k\, a_{n-k} \, \cos(\alpha_{n-k}-\alpha_k)$.

Now, the orthogonality conditions for the basis states constructed from the fiducial state $\ket{\psi_{\mathrm{PPI}}^{(n)}}$ through the action of $G_{\mathrm{tetra}}^{(n)}$ require that 
\begin{align}
    & \bra{\psi_{\mathrm{PPI}}^{(n)}}(\underbrace{Z\cdots Z}_{2m}\underbrace{\mathds{1}\cdots\mathds{1}}_{n-2m})\ket{\psi_{\mathrm{PPI}}^{(n)}} = \delta_{m,0} \notag \\
    \text{and } & \bra{\psi_{\mathrm{PPI}}^{(n)}}(\underbrace{Z\cdots Z}_{2m}\underbrace{\mathds{1}\cdots\mathds{1}}_{n-2m})(\underbrace{X\cdots X}_{n})\ket{\psi_{\mathrm{PPI}}^{(n)}} = 0
    \label{eq:orth_cstr_PPI}
\end{align}
for all $m=0,\ldots,\lfloor\frac{n}{2}\rfloor$; and by the PPI property, it is sufficient to verify these. Let us look at the cases of odd and even $n$ separately:

\begin{itemize}
    \item For odd $n$: from the two sets of constraints in Eq.~\eqref{eq:orth_cstr_PPI}, we obtain two independent linear systems of $\frac{n+1}{2}$ equations (for $m=0,\ldots,\frac{n-1}{2}$) each, for the $\frac{n+1}{2}$ variables $q_k$ and the $\frac{n+1}{2}$ variables $r_k$ ($k=0,\ldots,\frac{n-1}{2}$) respectively.
    E.g., the first system is $\big\{\sum_{k=0}^{\frac{n-1}{2}} q_k \, K^n_k(2m) = \delta_{m,0}\big\}_{m=0}^{\frac{n-1}{2}}$. Note that each $K^n_k(2m)$ is a polynomial in $m$ of degree $k$; hence, defining $P(m):=\sum_{k=0}^{\frac{n-1}{2}} q_k \, K^n_k(2m)$, we have that $P\in\mathcal{P}_{\leq\frac{n-1}{2}}$, the vector space of real polynomials in $m$ of degree $\le \frac{n-1}{2}$. The $\frac{n+1}{2}$ constraints in the system of equations above evaluate $P(m)$ at $\frac{n+1}{2}$ distinct points, and hence identify $P$ uniquely in $\mathcal{P}_{\leq\frac{n-1}{2}}$. Moreover, $\{K_k^n(2m)\}_{k=0}^{\frac{n-1}{2}}$ defines a basis of $\mathcal{P}_{\leq\frac{n-1}{2}}$ (because the degrees are $0,1,\ldots,\frac{n-1}{2}$, so the family is linearly independent), so that the representation of $P$ in this basis is unique. Therefore the coefficients $\{q_k\}$ are uniquely determined. The same reasoning applies to the second system of equations, from which we conclude that the coefficients $\{r_k\}$ are also uniquely determined.
    
    As it can be verified, these unique solutions are found to be $(\forall\,k)$
    \begin{align}
        q_k & = a_k^2+a_{n-k}^2 = \frac{1}{2^{n-1}}, \notag \\
        r_k & = 2a_k\, a_{n-k} \, \cos(\alpha_{n-k}-\alpha_k) = 0 \,.
    \end{align}
    Given the solution for $q_k = a_k^2+a_{n-k}^2$, one can introduce some real parameter $\theta_k$, for each $k=0,\ldots,\frac{n-1}{2}$, and write $a_k = \frac{1}{2^{(n-1)/2}}\cos\theta_k, a_{n-k} = \frac{1}{2^{(n-1)/2}}\sin\theta_k$.

    From the null value of $r_k$ above, we can see that if $a_k \,a_{n-k} \neq 0$, then one must have $\alpha_{n-k}-\alpha_k = \pm \frac{\pi}{2}$; up to changing for instance $\theta_k \leftrightarrow \pi-\theta_k$ (so as to change the relative sign of $a_k$ and $a_{n-k}$), we can always choose $\alpha_{n-k}-\alpha_k = + \frac{\pi}{2}$. If $a_k \,a_{n-k} = 0$ on the other hand, then one of the two phases $\alpha_k$ or $\alpha_{n-k}$ is irrelevant, and we can take the difference $\alpha_{n-k}-\alpha_k$ to have any value of our choice, for instance (again) $\alpha_{n-k}-\alpha_k = + \frac{\pi}{2}$. 

    All in all, without loss of generality the parameters $a_k$ and $\alpha_k$ of $\ket{\psi_{\mathrm{PPI}}^{(n)}}$ in Eq.~\eqref{eq:PPI-Dicke} can therefore be taken as in Eq.~\eqref{eq:psiPPI_amplitudes_phases}.
    
    \item For even $n$: similarly to the previous case, from the two sets of constraints in Eq.~\eqref{eq:orth_cstr_PPI}, we obtain two linear systems of $\frac{n}{2}+1$ equations (for $m=0,\ldots,\frac{n}{2}$) each, for the $\frac{n}{2}+1$ variables $q_k$ and $a_{\frac{n}{2}}$ for the first system, and the $\frac{n}{2}+1$ variables $r_k$ and $a_{\frac{n}{2}}$ for the second system ($k=0,\ldots,\frac{n}{2}-1$). 
    These systems are not independent anymore, as they share the variable $a_{\frac{n}{2}}$. Solving them independently gives the (unique, as above) solutions $a_{\frac{n}{2}}^2 = 1/2^{n-2}$ (and all other $q_k = 1/2^{n-1}$) for the first system, and $a_{\frac{n}{2}}^2 = 0$ (and all other $r_k = 0$) for the second system.
    Hence, we get a contradiction: there is no solution---i.e., no PPI state---that satisfies the orthogonality requirements in the case of even $n$.
    
\end{itemize}

\color{black}

\section{Clifford hierarchy and Cui-Gottesman-Krishna characterization of diagonal gates} \label{app:clifford}

\subsection*{Clifford hierarchy}

The Clifford hierarchy~\cite{Gottesman1999} is a recursively defined hierarchy of unitary operations that plays an important role in fault-tolerant quantum computation. Each level \( \mathcal{C}_k \) contains unitaries that conjugate Pauli operators into the previous level. For simplicity we will restrict the presentation here to the qubit ($d=2$) case.

To define the hierarchy, we begin with the $n$-qubit Pauli group \( \mathcal{P}_n \), which consists of all \( n \)-fold tensor products of the single-qubit Pauli matrices \( \{ \mathds{1}, X, Y, Z \} \), including phase factors \( \{ \pm1, \pm i \} \):
\begin{equation}
\mathcal{P}_n := \left\{ e^{i \phi} P_1 \otimes \cdots \otimes P_n \,\middle|\, P_j \in \{ \mathds{1}, X, Y, Z \},\; \phi \in \tfrac{\pi}{2} \mathbb{Z} \right\}.
\end{equation}
This group is closed under multiplication and forms the basis of the Clifford hierarchy.

The Clifford hierarchy is then defined recursively:
\begin{itemize}
    \item \( \mathcal{C}_1 := \mathcal{P}_n \), the Pauli group.
    \item \( \mathcal{C}_k := \left\{ U \in \mathcal{U}(2^n) \,\middle|\, U P U^\dagger \in \mathcal{C}_{k-1} \;\; \forall\, P \in \mathcal{P}_n \right\}. \)
\end{itemize}

For example, \( \mathcal{C}_2 \) is the Clifford group—unitaries that normalize the Pauli group—and \( \mathcal{C}_3 \) includes gates such as the \( T \)-gate, Toffoli, and other non-Clifford operations that are essential for universal quantum computation.

\subsection*{Diagonal unitaries in Cui-Gottesman-Krishna's framework}

In general, it is very difficult to characterize explicitly the unitaries in the Clifford hierarchy~\cite{silva2025}.
Following Ref.~\cite{Gottesman2017}, we will restrict ourselves to the classification of phase gates whose phases are (discrete) dyadic fractions of $\pi$---i.e., diagonal unitaries \( D_{f_m} \) acting as 
\begin{equation}
D_{f_m} \ket{\vec{z}} = \exp\left(i \frac{2\pi}{2^m} f_m(\vec{z})\right) \ket{\vec{z}},
\end{equation}
where \( f_m: \mathbb{Z}_2^n \to \mathbb{Z}_{2^m} \) is a polynomial over Boolean variables \( z_i \in \{0,1\} \), and \( m \in \mathbb{N} \) determines the phase precision.
The general form of \( f_m \) is:
\begin{equation}
f_m(z_1,\dots,z_n) = \sum_{S \subseteq [n]} a_S \prod_{i \in S} z_i \mod 2^m,
\end{equation}
where \( a_S \in \mathbb{Z}_{2^m} \) are integer coefficients. Since the global phase introduced by $D_{f_m}$ is irrelevant, one may choose to fix $f_m(0,\dots,0) = a_\emptyset = 0$, and therefore only consider nonempty subsets $S$ of $[n]$ in the sum above.

We now summarize Cui-Gottesman-Krishna's criterion~\cite{Gottesman2017} for determining the level \( \mathcal{C}_k \) to which \( D_{f_m} \) belongs.

\bigskip

\subsection*{Cui-Gottesman-Krishna characterization}

The Clifford level of a diagonal gate \( D_{f_m} \) is determined by the precision \( m \) and the degree of the polynomial \( f_m \). Ref.~\cite{Gottesman2017} shows that the level of the Clifford hierarchy in which a diagonal gate appears is determined by the largest value of \( m \) that appears in the exponent and the degree of the polynomial, and one only needs to consider the finest phase rotations. The level $k$ is then given by
\begin{equation} \label{eq:coarse}
k = \max_{S: a_S \neq 0} \big[ (m - 1) + |S| \big].
\end{equation}
This assumes all nonzero coefficients \( a_S \in \mathbb{Z}_{2^m} \) are odd. When one of the coefficients is divisible by a power of two, the effective precision is reduced. This can be seen by considering each monomial separately and expressing the coefficient in its 2-adic form.

Define the 2-adic valuation \( \nu_2(a) \) as the exponent of the highest power of 2 dividing \( a \in \mathbb{Z} \). For example:
\[
\nu_2(1) = 0,\quad \nu_2(4) = 2,\quad \nu_2(12) = 2,\quad \nu_2(0) := \infty.
\]
Each monomial of the form \( a_S \prod_{i \in S} z_i \) contributes to the diagonal unitary with a phase rotation of the form
\begin{equation}
\exp\left( i\frac{2\pi}{2^m} a_S \right) = \exp\left( i\frac{2\pi}{2^{m - \nu_2(a_S)}} b_S \right),
\end{equation}
where \( a_S = 2^{\nu_2(a_S)} b_S \) and \( b_S \) is odd. That is, the 2-adic valuation \( \nu_2(a_S) \) reduces the effective phase resolution for that term. This can be summarized in the following criterion:
\begin{equation} \label{eq:pacid}
k = \max_{S: a_S \neq 0} \big[ (m - \nu_2(a_S) - 1) + |S| \big].
\end{equation}

This ``maximum-over-terms'' rule reflects the fact that the Clifford hierarchy is closed under composition: multiplying two unitaries $U, V \in \mathcal{C}_k$ yields another unitary in $\mathcal{C}_k$. As a result, any diagonal gate whose phase polynomial contains multiple terms—each possibly contributing at a different level—will lie in the highest level among them. In other words, the term with the largest combined value $(m - \nu_2(a_S) - 1) + |S|$ dictates the overall hierarchy level of the gate, since one cannot ``undo'' the effect of a higher-level term by composing it with lower-level ones.

\end{document}